
\documentclass[10pt, conference, compsocconf, letterpaper]{IEEEtran}

\usepackage[justification=centering]{caption}
\usepackage[normalem]{ulem}
\usepackage{amsmath}
\usepackage{algpseudocode}
\usepackage{amsfonts}
\usepackage{comment}
\usepackage{caption}
\usepackage{eqparbox}
\usepackage{graphicx}
\usepackage{algorithm}						
\usepackage{url}   
\usepackage[noadjust]{cite}   
\usepackage{slashbox} 	
\usepackage{epstopdf}
\usepackage{subfig}
\usepackage{float}
%\captionsetup[subfloat]	
\usepackage{afterpage}
\usepackage{enumerate}
\usepackage{multirow}
\usepackage{amsthm}

\usepackage{mathtools}

\DeclarePairedDelimiter{\floor}{\lfloor}{\rfloor}

%\usepackage{array}
%\usepackage[font=small,labelfont=bf]{caption}				
%\usepackage{mwe}

%%commands for math
\newcommand{\E}{\mathbb{E}}
\newcommand{\found}{\emptyset}

\renewcommand{\binom}{\mathbb{B}}
\renewcommand{\over}{\Xi}
\newcommand{\best}{\Upsilon}

\allowdisplaybreaks

%% pdflatex --shell-escape main.tex

%%%%%%%%%%%%%%%%%%%%%%%%%%%%%%%%%%%%%%%%%%%%%%%%%%%%%%%%%%%%%%%%%%%%%%%%%%%%%%%%%%%%%%%%%%%%%%%%%%%%%%%%%%%%%%%%%%%%

\usepackage{color}
\newcommand{\NOT}[1]{}
\newcommand{\NOTE}[1]
{
  {\footnotesize\it
    \begin{center}
      \begin{tabular}{|c|}
        \hline
        \parbox{0.85\columnwidth}{
          \medskip
          #1
          \medskip} \\
        \hline
      \end{tabular}
    \end{center}
    }
}

\newtheorem{proposition}{Proposition}
\newtheorem{theorem}{Theorem}

\hyphenation{op-tical net-works semi-conduc-tor}

%%%%%%%%%%%%%%%%%%%%%%%%%%%%%%%%%%%%%%%%%%%%%%%%%%%%%%%%%%%%%%%%%%%%%%%%%%%%%%%%%%%%%%%%%%%%%%%%%%%%%%%%%%%%%%%%%%%%

\begin{document}

%\title{A Zero-cost Approach for Improving the Lookup Performance in Kademlia-type Systems}
\title{A Lightweight Approach for Improving the Lookup Performance in Kademlia-type Systems \\ \vspace{5mm} \LARGE{Technical Report}}
%\title{Diversity Entails Improvement: A new Neighbour Selection Scheme for Kademlia-type Systems}

\author{
\IEEEauthorblockN{Hani Salah$^{\dag}$ \qquad Stefanie Roos$^{\ddag}$ \qquad Thorsten Strufe$^{\ddag}$}
\IEEEauthorblockA{$^{\dag}$TU Darmstadt, Germany \qquad $^{\ddag}$TU Dresden, Germany\\
hsalah(at)cs.tu-darmstadt.de, \texttt{firstname.lastname}(at)tu-dresden.de}
}

%%%%%%%%%%%%%%%%%%%%%%%%%%%%%%%%%%%%%%%%%%%%%%%%%%%%%%%%%%%%%%%%%%%%%%%%%%%%%%%%%%%%%%%%%%%%%%%%%%%%%%%%%%%%%%%%%%%%

\maketitle
\begin{abstract}
Discovery of nodes and content in large-scale distributed systems is generally based on Kademlia, today. Understanding Kademlia-type systems to improve their performance is essential for maintaining a high service quality for an increased number of participants, particularly when those systems are adopted by latency-sensitive applications.

This paper contributes to the understanding of Kademlia by studying the impact %that the diversity of neighbours' identifiers, that are stored in every routing table bucket, has on the lookup performance. %Consequently, 
of \emph{diversifying} neighbours' identifiers within each routing table bucket on the lookup performance. We propose a new, yet backward-compatible, neighbour selection scheme %that improves the lookup performance by attempting 
that attempts to maximize the aforementioned diversity. The scheme does not cause additional overhead except negligible computations for comparing the diversity of identifiers. We present a theoretical model for the actual impact of the new scheme on the lookup's hop count and validate it against simulations of three exemplary Kademlia-type systems. We also measure the performance gain enabled by a partial deployment for the scheme in the real KAD system. The results confirm the superiority of the systems that incorporate our scheme.\\
\end{abstract}
\begin{keywords}
Kademlia; Lookup Performance; Performance Improvement; Markov Chain; Formal Routing Analysis
\end{keywords}
\IEEEpeerreviewmaketitle

%%%%%%%%%%%%%%%%%%%%%%%%%%%%%%%%%%%%%%%%%%%%%%%%%%%%%%%%%%%%%%%%%%%%%%%%%%%%%%%%%%%%%%%%%%%%%%%%%%%%%%%%%%%%%%%%%%%%
%   Paper's body
%%%%%%%%%%%%%%%%%%%%%%%%%%%%%%%%%%%%%%%%%%%%%%%%%%%%%%%%%%%%%%%%%%%%%%%%%%%%%%%%%%%%%%%%%%%%%%%%%%%%%%%%%%%%%%%%%%%%

\section{Introduction} \label{sec:intro}

The family of Kademlia-type \cite{Maymounkov02Kademlia} systems represents the most widely used and deployed type of Distributed Hash Tables (DHT). These systems are adopted, for the discovery of nodes and content, by several peer-to-peer (P2P) file-sharing applications, accounting for multi-million users today %\footnote{For example, see up-to-date size estimations of one of those DHTs at: https://dsn.tm.kit.edu/english/2936.php.} 
\cite{wang2013measuring, salah13capturing}, like Bittorrent %\footnote{http://www.bittorrent.com/} 
 and eMule%\footnote{http://www.emule-project.net/}
. Implementations of Kademlia have been also experimented as communication overlays in video streaming applications \cite{jimenez11kademlia} and botnets \cite{starnberger08overbot, holz08measurements}.

The routing (also called \emph{lookup}), the key operation in these systems, has received a great deal of attention from the research community in the last years. Some studies (e.g. \cite{li05comparing, stutzbach06improving, steiner10evaluating, falkner07profiling, crosby07analysis, Jimenez2011subsecond, roos13comprehending}) identified limitations in the \emph{standard} system designs, thus raised doubts about their suitability, particularly, for latency-sensitive applications. 
The authors accordingly proposed modifications to the standard designs in order to improve the system performance. 
Considered performance metrics are the hop count, the latency of the lookup, and the overhead in terms of the number of sent messages.

We aim to further improve the performance based on concrete theoretical models. 
In contrast to the above studies, we do not focus on optimizing the parameters governing routing table structure and lookup mechanism. Rather, the goal is to develop a low-overhead scheme, which can be integrated easily into all existing designs. 

Towards this end, we study a previously disregarded lookup performance factor -- the diversity of neighbours' identifiers within each routing table bucket. 
Consequently, we propose, model, and evaluate a new neighbour selection scheme that attempts to maximize this diversity. The scheme is  compatible not only with the standard Kademlia and its variations, but also with previously proposed improvements. It thus can be combined to any of them to improve their achieved performance further. It does not change the standard routing protocol nor the routing table structure, and it does not cause additional communication overhead. Only slight changes in the standard routing table's maintenance processes are required, causing only a negligible extra computational overhead. 

\vspace{3pt}
Our main contributions can be summarized as follows:

\begin{itemize}
\item We propose a new neighbour selection scheme to reduce the average lookup's hop count in Kademlia-type systems, with almost no extra overhead. 
\item We develop a theoretical model (extending a prior work \cite{roos13comprehending}) to asses the impact of the proposed scheme on the lookup performance.
\item We evaluate the scheme using both the model and extensive simulations of three Kademlia-type systems. 
\item We measure the impact of our scheme on the lookup performance of modified KAD clients in the real KAD system.
\end{itemize}
 
The model predictions and simulation results, which agree very closely to each other, show that the new scheme improves the lookup performance in form of reduction in the average hop count, and thus in the number of sent messages. 
The improvement applies also for the measurement results of the KAD clients that incorporates our scheme.

The remainder of this paper is structured as follows: We give an overview of Kademlia and its variations in Sec.~\ref{sec:kademlia}, and then discuss the related work in Sec.~\ref{sec:related}. Next, Sec.~\ref{sec:solution} presents an overview of our proposed scheme, Sec.~\ref{sec:model} describes our model, and Sec.~\ref{sec:results} discusses the results. Finally, Sec.~\ref{sec:conclusion} concludes the paper.

%%%%%%%%%%%%%%%%%%%%%%%%%%%%%%%%%%%%%%%%%%%%%%%%%%%%%%%%%%%%%%%%%%%%%%%%%%%%%%%%%%%%%%%%%%%%%%%%%%%%%%%%%%%%%%%%%%%%

\section{Kademlia-type Systems} \label{sec:kademlia}

Kademlia \cite{Maymounkov02Kademlia} is a structured peer-to-peer (P2P) system. It uses a $b$-bit identifier space from which the identifiers of nodes and objects are assigned. Nodes store key-value (key-object) pairs, such that the nodes at the closest distance to an identifier are responsible for storing it. The distance between two identifiers is defined as the XOR of their values.  

Each node $v$ stores the identifiers and addresses of other nodes, also called neighbours or contacts\footnote{From here onwards, these two terms are used interchangeably.}, in a $b$-level tree-structured routing table. Each level in the routing table consists of so called $k$-buckets, such that each bucket stores up to $k$ known contacts that share a common prefix with $v$'s identifier. Contacts that represent nodes which have left the system are called \emph{stale}. 

Kademlia implements a key-based routing protocol: To route a message from a node $v$ to a target identifier $x$, %(for the storage or retrieval of objects), 
$v$ picks $\alpha$ known contacts that are closest to $x$ and sends them lookup requests in parallel. Every queried contact that is online replies with the set of $\beta$ contacts that are locally known as being closest to $x$, thus extending $v$'s set of candidate contacts. This process iterates until no further contacts closer to $x$ are discovered or a timeout is held. The original Kademlia paper suggests $k=20$ and $\alpha=3$.

In order to mitigate the effect of churn, each node performs \emph{maintenance} processes for its routing table buckets. In practice\footnote{This is how it is implemented, for example, in the eMule software: http://www.emule-project.net.}, two periodic \emph{maintenance processes} are performed: The first process aims to increase the amount of contacts that are stored in the routing table by searching for potential new contacts belonging to low populated buckets. The second process aims to keep the routing information up-to-date by checking if the stored contacts are still responsive and removing stale ones. Long-lived contacts are checked less frequently than newer ones. This preference for old contacts is based on the observation that the longer a contact has been online, the more likely it is to remain online in the future \cite{saroiu01measur, Maymounkov02Kademlia}. 

The above design is the basis for a family of Kademlia-type systems. In this paper, we focus on three of them.  %each mainly adjusts the routing table structure and routing parameters. 
The mainline implementation of BitTorrent (MDHT) integrates one of those systems for nodes discovery. In MDHT, the routing table (Fig.\,\ref{fig:routing}(a)) includes a single k-bucket per level. uTorrent, the most popular MDHT implementation uses $k=8$, $\alpha=4$, and $\beta=1$ \cite{Jimenez2011subsecond}. 

\begin{figure*} \centering
\captionsetup{font=scriptsize}
      \includegraphics[width=0.95\linewidth, height=0.20\textheight]{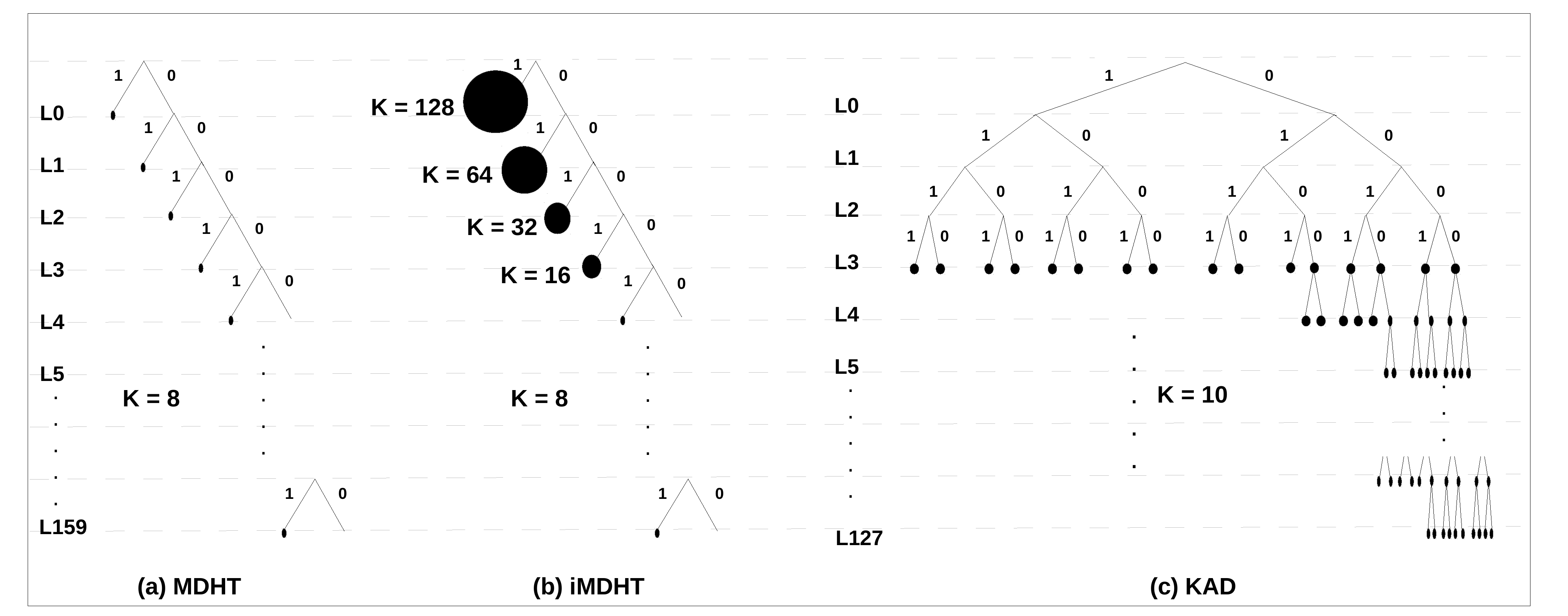}
   \caption{The routing table structures of three Kademlia-type systems (adapted from: \cite{wang08attacking}): (a) MDHT, (b) iMDHT, and (c) KAD.}
   %\vspace{-2em}
   \label{fig:routing}
\end{figure*} 

Considering the fractions of the identifier space that are covered at each routing table level $i$, Jimenez et al. \cite{Jimenez2011subsecond} introduced a variation of MDHT implementing variable bucket sizes (iMDHT) to increase the distance reduction at each hop. The bucket size is chosen to be 128, 64, 32, and 16 for the buckets at levels $i \in (0..3)$ respectively, and 8 for the rest (Fig.\,\ref{fig:routing}(b)). Both MDHT and iMDHT use $b=160$.

KAD, the DHT used by the popular file-sharing application eMule, uses  $b=128$, $k=10$, $\alpha=3$, and $\beta=2$. It implements a different routing table structure: As shown in Fig.\,\ref{fig:routing}(c), starting from the fourth level, the routing table includes multiple buckets per level, grouping contacts according to the first $l \in \{3,4\}$ bits after the first varying bit. Consequently, the difference between the common prefix length of the current hop and the next hop to $x$, called the \emph{bit gain}, is at least $l$.

%%%%%%%%%%%%%%%%%%%%%%%%%%%%%%%%%%%%%%%%%%%%%%%%%%%%%%%%%%%%%%%%%%%%%%%%%%%%%%%%%%%%%%%%%%%%%%%%%%%%%%%%%%%%%%%%%%%%

\section{Related Work} \label{sec:related}

Motivated by the high popularity of Kademlia and its variations, those systems have been the subject of many studies in the past few years. In this Section, we discuss only those studies which focused on the lookup process, or those which proposed improvements for the standard system design.

Crosby and Wallach \cite{crosby07analysis} measured the lookup latency in MDHT and Azureus (the DHT that is used by the Vuze Bittorent client). They reported high latency values, and attributed this to the observed high ratio of stale contacts in the routing tables. 
Similarly, Stutzbach and Rejaie \cite{stutzbach06improving} analysed the lookup process and measured the lookup latency in KAD. 

Several studies investigated the possibility to improve the lookup performance. The approach by Falkner et al. \cite{falkner07profiling} adapted the lookup parameters at runtime according to the number of expected lookup response messages. Their design reduced the median lookup latency but at the same time increased the lookup overhead. Steiner et al. \cite{steiner10evaluating}, in addition to analysing the lookup latency in KAD by evaluating the impact of both external factors (e.g. lookup message RTT) as well as the lookup parameters, achieved an improved lookup latency by coupling the lookup with the content retrieval process.  More recently, Jimenez et al. \cite{Jimenez2011subsecond} suggested several modifications to MDHT, and achieved better lookup latency with low additional overhead. 

A number of other studies succeed to reduce the lookup costs (measured by number of lookup messages or latency) via: caching \cite{einziger13kaleidoscope, wang06dicas}, geographical proximity \cite{kaune08embracing, castro02exploiting}, or recursive lookup \cite{heep10r}. 

In this paper, we propose to improve the lookup performance, in form of reduction in the average hop count, by adapting the standard neighbour selection scheme. Although the approach differs from earlier improvements, it is orthogonal and hence compatible with them. 
 
The model that we present in this paper extends on a prior model of Kademlia-type systems \cite{roos13comprehending}, which allows a very accurate prediction of the routing overhead.
Though theoretical analysis of P2P routing performance is widely studied, traditionally only asymptotic bounds have been derived (e.g. \cite{stoica2001chord,rowstron2001pastry, Maymounkov02Kademlia,malkhi2002viceroy}).
The few studies deriving exact formulas commonly only consider the average routing length, special cases such as bijective mapping from identifiers to nodes, and are of limited accuracy when compared to measurements or simulations (e.g.
\cite{stutzbach06improving,spognardi2006formal, rai2007performance}).
In particular, \cite{spognardi2006formal, rai2007performance} model P2P routing using a Markov chain approach similar to 
the one suggested in \cite{roos13comprehending}, but are restricted to systems without parallelism.

%%%%%%%%%%%%%%%%%%%%%%%%%%%%%%%%%%%%%%%%%%%%%%%%%%%%%%%%%%%%%%%%%%%%%%%%%%%%%%%%%%%%%%%%%%%%%%%%%%%%%%%%%%%%%%%%%%%%

\section{Improving the Lookup Performance} \label{sec:solution}

In this Section, we introduce an approach for improving the lookup performance in Kademlia-type systems: We give an overview of the idea in Sec.~\ref{subsec:idea} and then validate the main assumption on which it is based (against results obtained from a real Kademlia-type system) in Sec.~\ref{subsec:real_diverse}. In Sec.~\ref{subsec:increasing}, we describe how the approach can be implemented in practice. 

\subsection{Idea} \label{subsec:idea}

Our approach is based on a new neighbour selection scheme. As described in Sec.~\ref{sec:kademlia}, the $k$ contacts (i.e. neighbours) that each routing table bucket can store share a common prefix. This means that they all belong to a specific range of identifiers, thus represent a certain region of the identifier space. 
%\schange{This way, without further restrictions, it is likely that multiple contacts that are stored in the same bucket to have contiguous identifiers, i.e. represent contiguous positions in the identifier space.}
%{
This way, without further restrictions, multiple contacts in the bucket can have very close identifiers (e.g. represent contiguous positions in the respective identifier space region), whereas there are other portions of the region not closely covered by the stored contacts.%}

%\schange{We argue that storing contiguous contacts in the same bucket will reduce the intended benefit of the redundancy. More precisely, this will narrow the view (of the node that maintain the routing table) for the respective identifier space region.}
%{
%We argue that storing close contacts rather maximizing the diversity of contact identifiers within the same bucket narrows the view of the routing table owner for the respective identifier space region. %} 
By storing a large number of contacts from only one portion of the region, the node's view for the respective identifier space region is  narrowed.
Consequently, we propose to improve the lookup performance by \emph{widening} this view. That is, the node should try to maximize the \emph{diversify} of the identifiers that are stored in each of its routing table buckets independently, 
%\schange{by storing contacts whose identifiers are uniformly distributed over the range of identifiers covered by each bucket.}{
by choosing contacts such that their identifier prefixes are maximally diverse. %}
%\schange{Consequently, when the node performs a lookup request, it is more likely, on average,  to find contacts closer to the target than in the original system. This in turn is likely to reduce the average lookup distance, i.e. average hop count. }{
Then the expected common prefix length of the closest contact to an arbitrary target identifier is maximized, which should lead to a lower average number of hops.

We now show that the expected bit gain, i.e. the difference between the node and
the closest contact in its routing table to a target identifier $x$, is increased by maximizing the diversity. A general model of the actual impact on the average hop count is presented in Section \ref{sec:model}.
 %}
 \begin{theorem}
 \label{thm:bitgain}
 %\schange{}{
 Consider a $k$-bucket such that contacts in the bucket offer a bit gain of at least $l$. 
 The expected
bit gain $bg^{div}$ offered by the closest contact to $x$ in that $k$-bucket when maximizing
the diversity  is at least as big as $bg^{norm}$, the expected bit gain for the standard contact selection.%}
\end{theorem}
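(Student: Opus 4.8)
The plan is to reduce the expected bit gain to a sum of tail probabilities and then compare the two selection strategies term by term. Since both strategies inherit the same guaranteed $l$ matching bits, write the bit gain as $bg = l + G$, where $G \ge 0$ is the number of \emph{additional} bits matched by the closest contact beyond the guaranteed prefix. Because $l$ contributes equally to $bg^{div}$ and $bg^{norm}$, it suffices to show $\E[G^{div}] \ge \E[G^{norm}]$. Taking the target $x$ uniformly at random over the bucket's identifier region, I would use the standard identity $\E[G] = \sum_{t \ge 1} \Pr[G \ge t]$.

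The key step is to rewrite each tail probability in terms of prefix diversity. The event $\{G \ge t\}$ occurs exactly when at least one contact agrees with $x$ on the $t$ bits immediately following the guaranteed prefix. Letting $N_t$ denote the number of \emph{distinct} length-$t$ bit strings realised by the contacts on those positions, a uniform $x$ hits one of these strings with probability $N_t/2^t$, so $\Pr[G \ge t] = N_t / 2^t$ and hence $\E[G] = \sum_{t \ge 1} N_t / 2^t$. This expresses the expected bit gain as a monotone functional of the counts $(N_t)_{t \ge 1}$: making the contacts' prefixes more distinct at every level can only increase $\E[G]$.

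It remains to compare $(N_t)$ under the two strategies. For any selection of $k$ contacts one has the trivial bound $N_t \le \min(k, 2^t)$, since there are only $k$ contacts and only $2^t$ possible length-$t$ strings. Maximizing diversity is precisely the attempt to realise distinct prefixes, so the diverse arrangement achieves $N_t^{div} = \min(k, 2^t)$ at every level, whereas the standard arrangement, which places contacts without regard to prefix spread, satisfies $\E[N_t^{norm}] \le \min(k,2^t)$ with collisions typically making the inequality strict. Substituting termwise into $\E[G] = \sum_t N_t/2^t$ yields $\E[G^{div}] = \sum_t \min(k,2^t)/2^t \ge \sum_t \E[N_t^{norm}]/2^t = \E[G^{norm}]$, and adding back the common $l$ gives the claim.

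The main obstacle is the third step: justifying that a single diversity-maximizing configuration simultaneously attains the per-level maximum $N_t = \min(k,2^t)$ for \emph{all} $t$ at once, rather than trading off diversity at one depth against another. For $k$ equal to a power of two this follows from a balanced placement of prefixes in the underlying binary tree, but the general case requires arguing that spreading the $k$ prefixes as evenly as possible is simultaneously optimal at every level. A secondary subtlety is that real contacts are drawn from a finite node population, so the idealised value $\min(k,2^t)$ may be unattainable; I would address this by performing the same comparison relative to the available pool, noting that selecting the maximally spread subset can only increase each $N_t$ over a random subset, so the domination $\E[G^{div}] \ge \E[G^{norm}]$ persists.
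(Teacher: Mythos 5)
Your first two steps are sound and are essentially an equivalent reformulation of the paper's argument: writing $\E(G)=\sum_{t\geq 1}\Pr(G\geq t)$ and observing $\Pr(G\geq t\mid \text{configuration})=N_t/2^t$ restates the paper's manipulation of the CDF of the maximum bit gain. The genuine gap is in your third step, the claim that the diverse arrangement attains $N_t^{div}=\min(k,2^t)$ at every depth $t$. The scheme only diversifies the $q=\lfloor \log_2 k\rfloor$ bits after the common prefix: it guarantees one contact per length-$q$ prefix, while the remaining $k-2^q$ contacts, and \emph{all} bits beyond depth $q$ of every contact, are uniformly random. For $t\leq q$ your equality holds, and for $t>q$ the $2^q$ designated contacts keep distinct length-$t$ prefixes (distinct length-$q$ prefixes stay distinct when extended), so $N_t^{div}\geq 2^q$ deterministically; but $\min(k,2^t)=k$ for all $t>q$, and the surplus contacts can collide at depth $t$ with each other or with designated contacts, so $N_t^{div}$ is a random variable with $\E(N_t^{div})<k$ whenever $k>2^q$. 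Your closing diagnosis (whether a balanced placement is simultaneously optimal at all levels) therefore misses the real issue: it is not whether an ideal configuration \emph{exists} — for $k$ a power of two your argument is already complete and deterministic, which covers MDHT and iMDHT — but that for $k$ not a power of two (KAD uses $k=10$) the scheme does not produce such a configuration, and your final sentence asserting that the spread selection ``can only increase each $N_t$'' over a random one is exactly the statement that still needs proof, since for $t>q$ you must compare two random quantities, not a deterministic maximum against a random one.

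The missing comparison does hold and can be established inside your framework, per target string: under the diverse scheme a fixed depth-$t$ string with $t>q$ is unrealised with probability $\left(1-2^{-(t-q)}\right)\left(1-2^{-t}\right)^{k-2^q}$, versus $\left(1-2^{-t}\right)^{2^q}\left(1-2^{-t}\right)^{k-2^q}$ under the standard scheme, and Bernoulli's inequality $\left(1-2^{-t}\right)^{2^q}\geq 1-2^{q}\cdot 2^{-t}=1-2^{-(t-q)}$ yields $\E(N_t^{div})\geq \E(N_t^{norm})$ termwise. This is precisely the inequality at the heart of the paper's proof, which compares $bg^{norm}=l+\sum_{i>l}\bigl(1-F^l(i)^{k}\bigr)$ with $bg^{div}=l+q+\sum_{i>l+q}\bigl(1-F^l(i-q)\,F^l(i)^{k-2^q}\bigr)$ and applies the same bound $\bigl(1-1/2^{i-l}\bigr)^{2^q}\geq 1-2^q/2^{i-l}$. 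So your decomposition is a viable, even illuminating, route, but as written the termwise domination for $t>q$ — the crux of the theorem for non-power-of-two bucket sizes — is asserted rather than proved.
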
  
\begin{proof}
%\schange{
The cumulative distribution function (CDF) of the expected bit gain of one contact chosen uniformly at random
from all identifiers in the $k$-bucket is given by 
\begin{align*}
F^l(i)=\begin{cases}
0, & i < l \\
1-1/2^{i-l}, & i \geq l
\end{cases}
\end{align*}
because there is a guaranteed improvement of $l$
and the probability for every further bit to agree with the respective bit of $x$ is \begin{footnotesize}$\dfrac{1}{2}$\end{footnotesize}.
Note that the CDFs for the maximum of independent random variables $X_1, \ldots , X_m$
with CDFs $F_1, \ldots , F_m$ is $F(x)= 1 - \prod_{i=1}^m F_i(x)$.
Furthermore, the expected value of a random variable $X$ with values in $\mathbb{N}_0$ and CDF $F$ is
$\E(X)=\sum_{i=0}^\infty P(X > i)=\sum_{i=0}^\infty 1-F(i)$.
The distribution of the maximum bit gain of $k$ contacts when selecting contacts uniformly at random from all nodes
suitable for a bucket is thus
\begin{align*}
bg^{norm} = l+\sum_{i=l+1}^\infty 1 - F^l(i)^k.
\end{align*}
When maximizing the diversity of the $q=\lfloor \log k \rfloor$ additional bits,
there is one contact for each $2^q$ bit sequences as well as $k-2^q$ contacts 
chosen uniformly at random.
Thus, the guaranteed bit gain is $q$. The closest contact in the routing
table is either the one contact guaranteed to agree with $x$
in those $q$ bits or one of the contacts chosen uniformly at random.
The CDF of the first is given by $F^l_A(x)=F^l(x-q)$ and the CDF of the
maximum bit gain of the contacts chosen uniformly at random is $F^l_B(x)=\left(F^l(x)\right)^{k-2^q}$.
Thus, the CDF of the total bit gain is $F^l_A(x)F^l_B(x)$,
so that the expected bit gain is given by 
\begin{align*}
bg^{div} = l+q + \sum_{i=l+q+1}^\infty 1 - F^l(i-q)\left(F^l(i)\right)^{k-2^q}.
\end{align*}
$bg^{div}$ presents an upper bound on $bg^{norm}$ because 
\begin{align*}
bg^{norm} &=  l+\sum_{i=l+1}^\infty 1 - \left(F^l(i)\right)^{2^q}\left(F^l(i)\right)^{k-2^q} \\
&\leq l+q + \sum_{i=l+q+1}^\infty 1 - \left(F^l(i)\right)^{2^q}\left(F^l(i)\right)^{k-2^q} \\
&\leq l+q + \sum_{i=l+q+1}^\infty 1 - \left(1-\frac{2^q}{2^{i-l}}\right)\left(F^l(i)\right)^{k-2^q}  \\
&= l+q + \sum_{i=q+1}^\infty 1 - \left(1-\frac{1}{2^{i-l-q}}\right)\left(F^l(i)\right)^{k-2^q} \\&= bg^{div}.
\end{align*} \end{proof}
%This completes the proof.

%%%Note that 
Note that Theorem \ref{thm:bitgain} only shows that the expected bit gain of the closest
contacts in one node's routing table is increased. Lookup parallelism ($\alpha>1$) and further contacts in the lookup response ($\beta > 1$) are not considered so far. Nevertheless, the above result motivates an in-depth analysis of the contact selection scheme, which we present in the following sections.

From here onwards, we use the term \emph{diversity degree} for a bucket to indicate how diverse are the prefixes of the identifiers stored in it. It is measured by the number of distinct $\floor*{log~k}$ bits after the bucket's common prefix, resulting in a maximal degree of  $2^{\floor*{log~k}}$. That is 3 bits in MDHT and KAD, whereas iMDHT considers 7, 6, 5, and 4 bits for buckets at levels $i \in (0..3)$ respectively, and 3 bits for the rest. Fig.\,\ref{fig:uniformity} shows an exemplary MDHT routing table: The diversity degrees of buckets \textit{A} and \textit{B} are $8$ (i.e. the maximal value in MDHT) and $4$, respectively.

\begin{figure} \centering
\captionsetup{font=scriptsize}
      \includegraphics[width=0.95\linewidth, height=0.18\textheight]{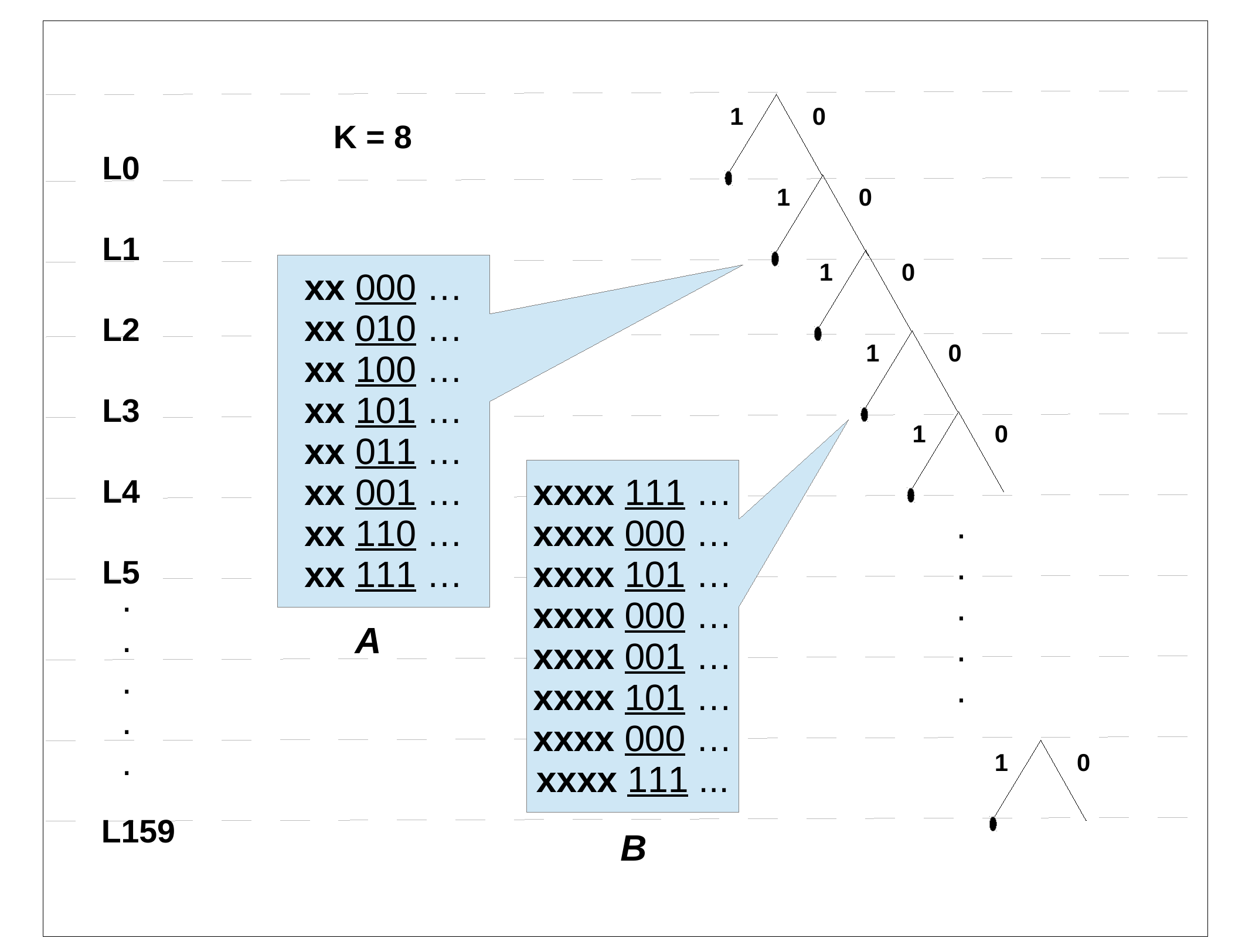}
   \caption{An exemplary MDHT routing table: Buckets \textit{A} and \textit{B} are located at the second and fourth levels (i.e. levels 1 and 3), thus have common prefixes of lengths 2 and 4, respectively. Considering the first three bits after the common prefix, their respective diversity degrees are $8$ (i.e. the maximal value in MDHT) and $4$.}
   \label{fig:uniformity}
\end{figure}

Note that the idea above is similar to the KAD's routing table structure, which divides contacts having the same common prefix length with the routing table owner into buckets according to the first 4 bits after the common prefix. However, our approach is more flexible, since it does not restrict the number of prefixes per bucket, but rather selects more diverse prefixes if possible, allowing for a less diverse contact selection if maximal diversity is not achievable. %xxx simialr to tapestry xxxx

%%%%%%%%%%%%%%%%%%%%%%%%%%%%%%%%%%%%%%%%%%%%%%%%%%%%%%%%%%%%%%%%%%%%%%%%%%%%%%%%%%%%%%%

\subsection{Diversity Degrees in Real Kademlia-type Systems} \label{subsec:real_diverse}

We here aim to validate our aforementioned assumption that the bucket in standard Kademlia-type systems is likely to store multiple contacts having very close identifiers.   
Towards this end, we downloaded routing tables of randomly selected online KAD nodes using an accurate KAD crawler \cite{salah13capturing}, and we then analysed the diversity degrees of the contact identifiers contained in their buckets. We restricted our analysis only to the buckets located at the fourth routing table level, for two reasons: First, given the routing table structure of KAD (Fig.\,\ref{fig:routing}(c)), those buckets jointly are used, on average, in \begin{footnotesize}$\dfrac{11}{16}$\end{footnotesize} of the lookup requests, hence represent the most important part of the routing table.
Second, there exist with very high probability nodes in the system that can fill those buckets with all possible prefixes, enabling those buckets to achieve higher completeness (as shown in \cite{stutzbach06improving}) and higher diversity degrees than buckets at lower levels. 

The results that we discuss here represent 1,505,658 buckets. We classify them, by the number of contacts they contain, into three groups: (\emph{i}) 170,262 buckets containing eight contacts each, (\emph{ii}) 271,585 buckets containing nine contacts each, and (\emph{iii}) 1,063,811 buckets containing ten contacts each (i.e. complete buckets). For each group, we computed the CDF of buckets with a diversity degree $\leq m$. Recall that in KAD $m \in (1..8)$. 

Fig.\,\ref{fig:measured_unif_cdf} shows the results: 42\% to 67\% of the buckets have $m \leq 4$ (i.e. half the maximal degree), and at most 8\% have the maximal degree. These results confirm our aforementioned assumption about diversity degrees in the real systems.

\begin{figure} \centering
\captionsetup{font=scriptsize}
   \includegraphics[width=0.95\linewidth, height=0.18\textheight]{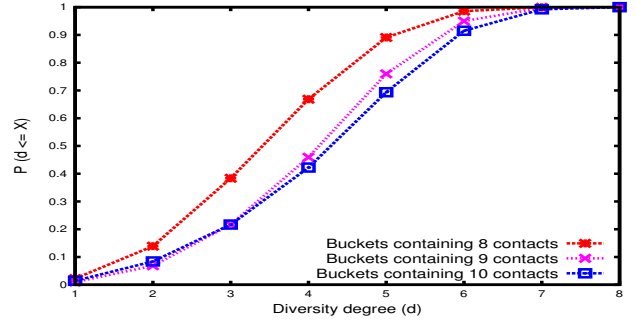}
   \caption{CDF of the diversity degrees of measured standard KAD buckets.}
   \label{fig:measured_unif_cdf}
\end{figure}

\subsection{Implementation} \label{subsec:increasing}

The aforementioned idea to improve the lookup performance can be implemented in real Kademlia-type systems by only slightly modifying the standard routing table's maintenance processes that we discussed in Sec.~\ref{sec:kademlia} (i.e. modifying the neighbour selection), without changing their frequency. More precisely, when a node decides to find new contacts to insert them to an incomplete bucket, or to replace stale contacts, it selects contacts whose identifiers increase the  diversity degree of that bucket. This way, the approach does not require changing the original routing table structure nor the routing protocol, and it does not induce any additional overhead, except computing the diversity degrees over the identifiers. 

%%%%%%%%%%%%%%%%%%%%%%%%%%%%%%%%%%%%%%%%%%%%%%%%%%%%%%%%%%%%%%%%%%%%%%%%%%%%%%%%%%%%%%%%%%%%%%%%%%%%%%%%%%%%%%%%%%%%

\section{Model} \label{sec:model}

In this Section, we analytically derive the hop count distribution in Kademlia-type systems
both for the standard contact selection scheme as well as for our modified scheme% (which improves the diversity of identifiers at the bucket level)
. More precisely, we determine the probability that the path needed to find the closest node to a target identifier $x$ is of length $h$ for both schemes. 

The model that we present here extends our prior work \cite{roos13comprehending} such that:  (\emph{i}) it allows queries for arbitrary identifiers rather than for only node identifiers, and (\emph{ii}) it integrates our contact selection scheme, which improves the diversity of identifiers, into the model. In Sec.~\ref{sec:modelreview}, we review the main components of the original model \cite{roos13comprehending}. In Sec.~\ref{sec:modelsuccess} and Sec.~\ref{sec:modeltransition}, we describe in details our modifications on the original model to achieve (\emph{i}) and (\emph{ii}), respectively.

\subsection{Model Principles} \label{sec:modelreview}
We analytically answer the question:
Given an arbitrary target identifier $x$ and a Kademlia-type system, how likely is it
to discover the responsible node within $h$ hops for all $h$?

To this end, we model the routing process as a Markov chain, such that states of the chain represent the 
common prefix length of the $\alpha$ closest nodes with the target identifier $x$. 
Due to the prefix-based routing table structure, the common prefix length is sufficient
to determine the transition probabilities, i.e. the probability to change
from one set of common prefix lengths to either the terminal state of discovering the target or 
another set of  common prefix lengths.
The probability to reach the target in $h$ hops is then given by the probability that
the Markov chain is in the terminal state after $h$ steps.

%One first derives the probability distribution over the $\beta$ closest contacts
%in each of the $\alpha$ nodes' routing table before choosing the $\alpha$
%closest distinct contacts all of them. 

%, so that the probability of a state transition can derived by considering the
%common prefix length of the  $\beta$ closest contacts to $t$ 
%in each of the $\alpha$ closest node's routing tables.

%In each step of the routing, 
%
%Since routing tables are constructed based on the common prefix length,
%the probability of a state transition can derived by considering
%the $\beta$ closest contacts to $t$ 
%in each of the $\alpha$ closest node's routing tables and then using the
%$\alpha$ closest distinct contacts.
%Since routing tables are constructed based on the common prefix length, 
%we can derive a probability distribution for the common prefix length of the
%$\beta$ closest contacts to $t$ in each of the $\alpha$ closest node's routing
%tables.  

Recall that the assignment of a contact to a bucket is made on the basis of the common prefix length with the routing table owner. The corresponding distance function, which we refer to as the \emph{bit distance}, is%
\begin{align}%
dist(x,y) = b - commonprefixlength(x,y).%
\end{align}%

Following \cite{roos13comprehending}, we formally characterize a Kademlia-type system by the identifier space size $b$,  and the routing table parameters $k$ (the bucket size) and $L$ (the number of buckets per level).
The $d$-th entry $k_d$ of the $b$-dimensional vector $k$ gives the bucket size for contacts at bit distance $d$ to the routing table owner. For example, in iMDHT we have
\begin{align*}
k_d = \begin{cases}
128, & d=b \\
64, &  d=b-1 \\
32 & d=b-2 \\
16, & d=b-3 \\
8, & d < b-3.
\end{cases}
\end{align*}

The $b\times b$-matrix $L$ determines the number of buckets per level as well as how the identifier space is split among these buckets.
So the entry $L_{ij}$ of the matrix $L \in \mathbb{R}^{(b+1) \times (b+1)}$ gives the fraction of the identifiers 
at bit distance $i$ to the routing table owner for which $j$ additional digits besides the common prefix
are considered for deciding the bucket.
For example, in KAD,
$L_{b4}=1$, $L_{i3} = 0.75$, and $L_{i4} = 0.25$
for $i < b$. In the context of our probabilistic model, we consider a random variable $L_d$, whose distribution
is given by the $d$-th row of $L$.
The routing algorithm is modelled for arbitrary $\alpha$ and $\beta$ in \cite{roos13comprehending}. However, we only consider here the standard values of KAD: $\alpha=3$ and $\beta=2$.

We now describe the general idea of the derivation presented in \cite{roos13comprehending} using the above terminology.
Kademlia routing is modelled as a Markov chain, so that states correspond to the bit distance of the closest
$\alpha$ known nodes to the target identifier $x$.
Routing termination is denoted by the state $\found$. 
The state space is hence given by
\begin{align*}
S_\alpha = \{\found\} \cup \{(d_1, \ldots, d_\alpha): d_i \in \mathbb{Z}_{b+1}, d_i \leq d_{i+1}\}. \end{align*} The initial distribution $I$ gives the probability that $\alpha$ closest nodes in the requesting node's routing table have bit distances $d_1, \ldots ,d_\alpha$ to $x$.  
For any non-terminal state, the transition matrix $T$ gives the probability to get from a set of $\alpha$ contacted nodes with bit distances $d_1, \ldots ,d_\alpha$ to 
either the terminal state or a set of nodes with bit distances
 $\tilde{d}_1, \ldots ,\tilde{d}_\alpha$. 
The success rate after $k$ hops is obtained by computing 
 $T^{k-1}I$ and choosing the entry corresponding to the terminal state $\found$.
 
The model is based upon various assumptions: Node identifiers are assumed to be chosen uniformly at random,
and queries are blocking, i.e.  at each hop exactly $\alpha$ nodes are
queried before deciding on the next set of contacts to query.
The basic model also assumes a steady-state, no churn system with
maximally full buckets. 
However, it is shown in \cite{roos13comprehending} how to extend the model
to dynamic environments.
For simplicity, we stick with the basic model in the following.

%%%General computation of I and T
The essential part of the derivation is to determine the probability distribution
of the bit distances of the closest $\gamma \in \{\alpha, \beta \}$
contacts in a node's routing table given the bit distance $D$ from the routing table owner to $x$.
The probability that the bit distances $C$ of the closest contacts corresponds to $s \in S_\gamma$ is given by
\begin{align}
\label{eq:closest}
\begin{split}
&P(C = s | D = d) \\
&= \sum_{l=0}^bP(C=s|X_0=d,L_d=l)P(L_d=l) \\
&= \sum_{l=0}^b P(C=s|X_0=d,L_d=l)L_{dl}.
\end{split}
\end{align}
The initial distribution, giving the probability of all possible distances of the requesting node, is obtained from Eq. \ref{eq:closest} by summarizing over all possible distances, i.e.
$P(X_0 = s) = \sum_{d=0}^b P(X_0 = s | D = d) P(D=d)$ with $P(D=d)=\frac{2^{d}-2^{d-1}}{2^b}$.
Similarly, the transition probabilities can be obtained from Eq. \ref{eq:closest}.
There are $\alpha\beta$ returned contacts of which the closest $\alpha$
\emph{distinct} contacts need to be selected. Let $R$ denote the
$\alpha\beta$ returned bit distances, 
$Z$ the set of all possible returned bit distances, and 
$top_{\alpha}: Z \rightarrow S_\alpha$  the shortest $\alpha$
distances of distinct contacts.  
Then the transition probability is expressed by
\begin{align}
\label{eq:transition}
%\begin{split}
&P\big(X_1  = s | 
(X_0 = (d_1, ...,d_\alpha)\big) \nonumber \\
=& \sum_{z \in Z} P(top_\alpha(z)=s|X_0 = (d_1, ...,d_\alpha), R=z)  \\
\cdot &P(R=z | X_0 = (d_1, ...,d_\alpha))  
 \prod_{j=1}^\alpha P\Big(C=z_j| D=d_j\Big) \nonumber.
%\end{split}
\end{align}
 
The only component of the above derivation influenced by our changes is the probability $P(C = s | D = d,L_d=l)$ in Eq. \ref{eq:closest}. Our first change to the original model \cite{roos13comprehending} (allowing queries to any identifier) affects only the last step of the routing. There is no guarantee that the bucket covering the region that $x$ belongs to contains one node, whereas under the assumption of maximally full routing tables, this is given when $x$ corresponds to a node identifier. We hence consider the case $C=\found$ both for the standard scheme and for our modified scheme in Section \ref{sec:modelsuccess}.
The second change (selecting contacts such that the prefixes in the buckets are maximally diverse) modifies the probability $P(C = s | D = d)$ for all states $s$. We hence derive $P(C = s | D = d)$ for non-terminal states $s$ in Section \ref{sec:modeltransition}, distinguishing the case of $\beta=2$ and $\alpha=3$.

\subsection{Assumptions and Notations}
In this section, we state our assumptions and notations for the model. Our assumptions are mostly identical to those in \cite{roos13comprehending}, so we refer to that
publication for an in-depth discussion of their impact.
\begin{enumerate}
\item The system is in steady state without churn, failures, and attacks. In particular, there are no stale contacts in the routing tables and nodes do not fail nor do they drop messages. Furthermore, buckets are maximally full, i.e. if a bucket contains $k_1 < k$ values, there are exactly those $k_1$ nodes in the region the bucket is responsible for.
\item Node identifiers are uniformly and independently distributed over the whole identifier space. 
Requested identifiers are also chosen uniformly at random from the whole ID space. 
\item Routing table entries are chosen independently. 
\item The lookup uses strict parallelism, i.e. a node awaits all answers to its queries before 
sending additional ones.
\end{enumerate} 

\vspace{3pt}
The following notations are used throughout the derivation: 
\begin{itemize}
\item $C^{norm}$ and the $C^{own}$ denote the closest contact distributions in both the standard scheme and our modified scheme, respectively.
\item $\alpha$ is the degree of parallelism, $\beta$ the number of returned closest contacts when queried
for an identifier, $k_d$ is the bucket size at distance $d$.
\item The number of additional bits considered for the replacement scheme is given by 
$q_d=\lfloor log~k_d \rfloor$. Furthermore, we generally drop the index $d$ for $k_d$ and $q_d$ if the conditioning on $d$ is explicit.
\item The probability that a binomially distributed random variable with parameters $m$ and
$p$ takes value $z$ is denoted by $\binom(m,z,p)= {m \choose z}p^z(1-p)^{m-z}$.
\item The probability that $c$ elements chosen from a set $B$ of size $b$ are chosen from a subset
$A\subset B$ of size $a$ is abbreviated by 
\begin{align*}
\over(a,b,c)=\frac{{a \choose c}}{{b \choose c}}.
\end{align*}
\item The probability that the $\gamma$ closest
contacts to $x$ within a group of size $a$ within distance $d$ to $x$ 
have bit distances $\delta_1, \ldots , \delta_\gamma$, is denoted $\best_\gamma((\delta_1, \ldots , \delta_\gamma),d,a)$ as derived in \cite{roos13comprehending},
Eq. 10. The group %of contacts 
is assumed
to be uniformly selected from all nodes within distance at most $d$. 
\item  The probability that there are no nodes in a region with $2^d$ identifiers
is abbreviated by $em(d) = \left(1-2^{d-b}\right)^{n-1}$.
\end{itemize}
Before our main derivation, we discuss an approximation made during the latter steps.

\subsection{Approximating: Empty Buckets}
\label{sec:approximation}
When querying for random IDs, there might not
be node in the bucket closest to the target identifier.
In such a case, we assume that the routing table owner knows the responsible node and the
routing terminates in the next hop.
We now show that this assumption does not considerably increase the success probability.
The responsible node either belongs to a different bucket on the same level or it is at the same
bit distance to $x$ as the routing table owner, thus belonging to a bucket on a higher level.
The responsible node is possibly not contained in the routing table if more than $k_d$ nodes
are in the region of the bucket that it belongs to.
However, the said bucket covers at most the same number of identifiers as the empty bucket that $x$ belongs to. For typical values of $k$ being at least 8, the probability 
of such an event is barely above $0.0001$, as detailed in the following.
We start by explaining the derivation in general before considering
MDHT, iMDHT, and KAD in detail.

Let $u$ be the routing table owner and $E_1$ denote the event that
that the bucket with the longest common prefix with $x$ 
in $u$'s routing table is empty. Furthermore, denote by $v$ the node
that is responsible for $x$, and by $E_2$ the event that $v$ is
not contained in $u$'s routing table.
We bound the probability by maximizing over all possible bit distances of $u$ to
$x$ 
\begin{align}
\label{eq:e12}
&P(E_1 \cap E_2) = P(E_1)P(E_2|E_1) \\
&\leq \max_{\{d,l\}} P(E_1|D=d,L_d=l)P(E_2|E_1,D=d,L_d=l). \nonumber
\end{align}
The probability that there are no nodes with the a common prefix length of $b-d+l$ with
$x$ is
\begin{align}
\label{eq:e1}
P(E_1|D=d,L_d=l)=\left( 1 - 2^{d-l-b}\right)^{n-1}.
\end{align}
An upper bound on the second factor is obtained by the
probability that there are more nodes than the bucket size with the same common
prefix as $v$.
%that the bucket $v$ belongs in is empty.
The probability depends on the routing table structure.
For MDHT and iMDHT, which do not consider any more bits after the first non-common
bit for the routing table structure, an upper bound is given by the probability
that any bucket on higher levels in $u$'s routing table cannot contained all nodes in their respective regions.
Note that by conditioning on $E_1$, all nodes have chosen an identifier within
the remaining fraction $1-2^{d-1-b}$ of the identifier space,
so that the probability to have a common prefix $b-d+i$ with $u$ is
$\frac{2^{d-i-b}}{1-2^{d-1-b}}$.
The probability of the event $A_i$ that more than $k_{d-i}$ can be chosen at level $i < d-1$ 
is hence obtained from the complementary cumulative distribution function
of a binomially distributed random variables with parameters $n-1$ and
$\frac{2^{d-i-b}}{1-2^{d-1-b}}$. 
The probability that any of these events $A_i$ hold is then obtained by a union
bound, resulting in  
\begin{align}
\label{eq:e2MDHT}
&P(E_2|E_1,D=d,L_d=1) \leq \sum_{i=2}^d P(A_i|E_1,D=d,L_d=l) \\ 
&= \sum_{i=2}^d \left(1 - \sum_{j=0}^{k_{d-i}}\binom(n-1,\frac{2^{d-i-b}}{1-2^{d-1-b}},j)\right). \nonumber
\end{align}
The bounds for MDHT and iMDHT can be computed from Eq. \ref{eq:e12}, Eq. \ref{eq:e1}, and Eq. \ref{eq:e2MDHT}.

When considering KAD, there are 5 buckets for each non-zero common prefix length, hence we multiply the above probability
with 5. 
However, these buckets are only of interest if all buckets containing nodes with common prefix length $b-d$
are empty. 
If at most $k_d$ nodes are available for the closest non-empty bucket to $x$, the responsible node is
contained in $u$'s routing table regardless of the buckets on lower levels.
There are $m \in \{5,8\}$ buckets with the same common prefix. 
Let $j$ be the index of the empty bucket with the longest common prefix with $x$.
We sort the remaining $a \in \{4,7\}$ buckets with the same common prefix to $u$ by their distance to $x$. Let $C_{ij}$ be the number of nodes in the $i-th$
closest bucket, and $p_d= \sum_{i=2}^d \left(1 - \sum_{j=0}^{k_{d-i}}\binom(n-1,\frac{2^{d-i-b}}{1-2^{d-1-b}},j)\right)$
the quantity from Eq. \ref{eq:e2MDHT}. 

\begin{figure}
\captionsetup{font=scriptsize}
\includegraphics[width=\linewidth]{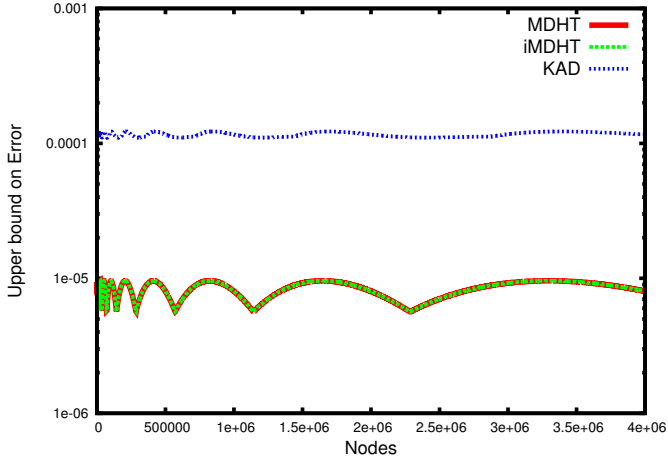}
\caption{Estimate error when assuming that the destination identifier $x$ is found if the respective routing table bucket is empty.}
\label{fig:error}
\end{figure}

Then the second probability from Eq. \ref{eq:e12} can be bound for KAD by
\begin{align}
\label{eq:e2KAD}
%\begin{split}
& P(E_2|E_1,D=d,L_d=l) \nonumber \\
&\leq  \max_{j=1,\ldots , m} P(C_{1j} > k|E_1,D=d,L_d=l) \nonumber \\
&+P(C_{1j} = 0|E_1,D=d,L_d=l) \\
&\left(P(C_{2j} > k_d|E_1,D=d,L_d=l,C_{1j}=0) \right. \nonumber \\
&+ P(C_{2j} = 0|E_1,D=d,L_d=l,C_{1j}=0) \left(\ldots + \right. \nonumber\\
&\left. \left. P(C_{aj} = 0|E_1,D=d,L_d=l,C_{1j}=0,.., C_{a-1j}=0)p_d\right)\ldots \right). \nonumber
%\end{split}
\end{align}
The variable $C_{ij}$ given $E_1,C_{1j}=0, \ldots ,C_{i-1j}=0$ is binomially distributed with parameter $n-1$ and $q_{ij}$ chosen based on the combined fraction of identifiers that are covered by the closest $i-1$ buckets.
For $d=b$, there are 8 buckets, each responsible for the same number of identifiers, so that 
$q_{ij}=\frac{2^{-4}}{1-i2^{-4}}$. 
Otherwise, there are three distinct possibilities. If $L_d=4$ (i.e. $j=4$ or $j=5$), four bits are resolved for the closest bucket, but only
three for the remaining three buckets, so that
\begin{align*}
q_{ij} = \begin{cases} 
\frac{2^{d-4-b}}{1-2^{d-4-b}}, & i=1 \\
\frac{2^{d-3-b}}{1-2\cdot 2^{d-4-b}-(i-2)\cdot 2^{d-3-b}}, & i>1.
\end{cases}
\end{align*} 
If $l=3$, there are two possibilities. If the XOR distance of $x$ and $u$ starts with $11$ (i.e. $j=1$ or $j=2$), the closest two buckets resolve 3 bits, and $q_i$ is given by
\begin{align*}
q_{ij} = \begin{cases} 
\frac{2^{d-3-b}}{1-i\cdot 2^{d-3-b}}, & i\leq 2 \\
\frac{2^{d-4-b}}{1-3\cdot 2^{d-3-b}-(i-3)\cdot 2^{d-4-b}}, & i>2.
\end{cases}
\end{align*} 
Otherwise (i.e. $j=3$), the two closest buckets resolve 4 additional bits, and
\begin{align*}
q_{ij} = \begin{cases} 
\frac{2^{d-4-b}}{1-2^{d-3-b}-(i-1)\cdot 2^{d-4-b}}, & i\leq 2 \\
\frac{2^{d-4-b}}{1-2^{d-3-b}-2\cdot 2^{d-4-b}-(i-3)\cdot 2^{d-3-b}}, & i>2.
\end{cases}
\end{align*} 
This completes our derivation of an upper bound on the error made by assuming that a target is found if the respective bucket is empty. 

Figure \ref{fig:error} displays the upper bounds for MDHT, iMDHT and KAD, considering networks between 1,000 and 4,000,000 nodes. 
The upper bound error remains within certain bounds, reaching its maximum and minimum once for each $2^i$ additional nodes, since doubling the network size corresponds to filling one more level in the routing table.
For MDHT and iMDHT, the error probability is below $10^{-5}$. The results for the two DHTs are identical because only the higher levels have a realistic change of containing empty buckets and these have identical structures.  
The error probability for KAD is higher due to having buckets on the same level, but is barely above $10^{-4}$.

%%%Success probability when considering IDs
\subsection{Success Probability}
\label{sec:modelsuccess}
We first consider the case $C=\found$ both for the standard scheme and for our maximally-diverse contact selection scheme. 
The idea is to summarize over the number of possible nodes in the bucket closest to $x$.
If there are $m>0$ such nodes and at most $l\leq m$ edges to those nodes, the probability that 
one edge leads to the node responsible for $x$ is $l/m$.
If $m=0$, we assume that the responsible node is known to the routing table owner,
as motivated in Section \ref{sec:approximation}.
x.
Propositions \ref{prop:s1} and \ref{prop:s2} give the probability that the node responsible for $x$
is contained in a node's routing table both for the standard and for our modified scheme, respectively.

\vspace{3pt}
\begin{proposition}
\label{prop:s1}
The probability that a node $v$ at bit distance $d$ to the target identifier $x$ knows the responsible
node given the number of further resolved bits $L_d$ is
\begin{align}
\label{eq:Fstandard}
%\begin{split}
&P(C^{norm} = \found | D = d,L_d=l)  \approx \\
%= \sum_{m=0}^{n-1} P(Y_1=m|D = d,L_d=l)\min\{1, k/m\}
%\approx \sum_{m=0}^{n-1} {n-1 \choose m} \left(2^{d-l-b}\right)^m \left(1-2^{d-l-b}\right)^{n-1-m} 
%\min\{1, k/m\}
&\binom(n-1,0,2^{d-l-b}) + 
\sum_{m=1}^{n-1}\binom(n-1,m,2^{d-l-b}) \min\{1, k/m\} \nonumber
%\end{split}
\end{align} 
when selecting contacts in a bucket uniformly at random from all nodes in the region covered by
the bucket.
\end{proposition}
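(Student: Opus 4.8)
The plan is to decompose the event $C^{norm}=\found$ by conditioning on $m$, the number of nodes other than $v$ that fall into the bucket of $v$ sharing the longest common prefix with $x$. First I would identify the region covered by that bucket. Since $v$ lies at bit distance $d$ from $x$, the two share a common prefix of length $b-d$; resolving $L_d=l$ further bits pins down a single bucket whose contacts must additionally agree with $x$ on those $l$ bits. Combined with the shared prefix, a contact in that bucket matches $x$ on its first $b-d+l$ bits, so a uniformly chosen identifier lands in the region with probability $2^{d-l-b}$, and $m$ is distributed as $\binom(n-1,\cdot,2^{d-l-b})$.

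The core of the argument is a case analysis on $m$. When $m\geq 1$ I would first observe that the node responsible for $x$—the globally closest node—must itself lie in this region: any node in the region already agrees with $x$ on all $b-d+l$ defining bits, so the closest node agrees on at least that many and hence matches the bucket prefix. Under the maximally-full assumption, if $m\leq k$ all $m$ nodes are stored and the responsible node is known with certainty; if $m>k$ the standard scheme retains $k$ of the $m$ candidates uniformly at random, and since this selection is oblivious to proximity to $x$, the single responsible node survives with probability exactly $k/m$ by exchangeability. Both subcases are captured by the factor $\min\{1,k/m\}$, giving the summand $\binom(n-1,m,2^{d-l-b})\min\{1,k/m\}$.

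The remaining case $m=0$ is precisely where the approximation sign enters. When the closest bucket is empty, the responsible node lies outside the region and need not be contained in $v$'s routing table; following Section \ref{sec:approximation}, I would treat it as known, contributing the full weight $\binom(n-1,0,2^{d-l-b})$. Summing the $m=0$ term together with the $m\geq 1$ terms by the law of total probability then yields the claimed expression.

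The step I expect to require the most care is the $m>k$ subcase: making precise that the responsible node behaves as a \emph{typical} member of the $m$ in-region nodes, so that the uniform replacement policy retains it with probability $k/m$. This hinges on the independence of routing-table entries (Assumption~3) and on the selection being performed without reference to $x$. Identifying the bucket fraction $2^{d-l-b}$ also demands careful bit-counting, but is routine once the common-prefix structure is laid out. The $m=0$ approximation is not proved here; its error is controlled by the bound established in Section \ref{sec:approximation}, which is exactly why the statement asserts $\approx$ rather than equality.
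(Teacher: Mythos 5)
Your proposal is correct and follows essentially the same route as the paper's proof: conditioning on the binomially distributed number $m$ of nodes in the $2^{d-l}$-identifier region containing $x$, applying the uniform-selection factor $\min\{1,k/m\}$, and absorbing the $m=0$ case via the empty-bucket approximation of Section \ref{sec:approximation}. In fact you spell out two points the paper leaves implicit — that the responsible node necessarily lies in the region when $m\geq 1$, and the exchangeability argument behind the $k/m$ survival probability — so nothing is missing.
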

\begin{proof}
The proof follows straightforward from the assumption that $n-1$ nodes (all besides the routing
table owner who is known to be at distance $d$) choose their identifier uniformly at random from
$2^{b}$ identifiers. There are $2^{d-l}$ identifiers sorted in the same bucket as $x$, so that
the probability of a node to choose such an identifier is $2^{d-l-b}$.
If there are less than $k$ identifiers in the region, the responsible node is contained in the bucket,
otherwise the contacts in the buckets are chosen uniformly from all nodes in the region, i.e.
with probability $k/m$ for $m$ nodes. \end{proof}

\vspace{3pt}
\begin{proposition} \label{prop:s2}
The probability that a node $v$ at bit distance $d$ to the target identifier $x$ knows the responsible
node given the number of further resolved bits $L_d$ is
\begin{align}
\label{eq:Fown}
%\begin{split}
&P(C^{our} = \found | D = d,L_d=l) \nonumber \\
&\approx \binom(n-1,0,2^{d-l-b}) +\sum_{m=1}^{n-1} \binom(n-1,m,2^{d-l-b}) \\
&\cdot \left(\binom(m,0,2^{-q})\min\{1, k/m\} 
+ \sum_{j=1}^{m}  \binom(m,j,2^{-q}) \rho_j \right)\nonumber 
%\left(\frac{1}{j} + \right. \right. \nonumber \\
%& \left.  \left. 
%(1-\frac{1}{j})\sum_{i=1}^{2^{q}-1} 
%\frac{{2^q-1 \choose i}{m-j \choose i-1}}{{m-j+2^q-2 \choose 2^q-2}} 
%\min\left\lbrace 1,\frac{k+2^{q}-i-1}{\max\{1,m-i-1\} }\right\rbrace\right) 
%\right) \nonumber
\end{align} 
with
\begin{align*}
& \rho_j = \frac{1}{j} \\
 + 
(1-\frac{1}{j})\sum_{i=1}^{2^{q}-1} &
\frac{{2^q-1 \choose i}{m-j \choose i-1}}{{m-j+2^q-2 \choose 2^q-2}} 
& \min\left\lbrace 1,\frac{k+2^{q}-i-1}{\max\{1,m-i-1\} }\right\rbrace
\end{align*}
when maximizing the diversity of the contact prefixes in the bucket.
\end{proposition}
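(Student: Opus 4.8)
\emph{Proof plan.} The plan is to mirror the structure of Proposition~\ref{prop:s1}, conditioning first on the number $m$ of nodes that fall in the bucket region closest to $x$ and then on the finer occupancy induced by the $q=\lfloor\log k\rfloor$ diversity bits. Since each of the $n-1$ other nodes lands in the region (which covers $2^{d-l}$ identifiers) independently with probability $2^{d-l-b}$, the outer sum over $m$ with weights $\binom(n-1,m,2^{d-l-b})$ is inherited unchanged from Proposition~\ref{prop:s1}, and the $m=0$ term is again handled by the empty-bucket approximation of Section~\ref{sec:approximation}, producing the isolated leading summand. The whole task therefore reduces to computing, for each fixed $m\ge 1$, the conditional probability that the node responsible for $x$ is retained under the maximally-diverse selection.

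For fixed $m$ I would next condition on the number $j$ of these nodes that share the $q$-bit sub-prefix of $x$. Because the target splits the region into $2^q$ equally likely sub-prefixes, $j$ is $\mathrm{Binomial}(m,2^{-q})$, which supplies the weights $\binom(m,j,2^{-q})$. When $j=0$ the sub-prefix of $x$ is empty, the responsible node lies in a neighbouring sub-prefix, and the diversity rule confers no targeted advantage; here I would approximate the retention probability by the standard-scheme value $\min\{1,k/m\}$ of Proposition~\ref{prop:s1}, giving the $\binom(m,0,2^{-q})\min\{1,k/m\}$ term. The remaining mass $j\ge 1$, where the diversity gain actually appears, is captured by the factor $\rho_j$.

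To derive $\rho_j$ I would invoke the description of the scheme from the proof of Theorem~\ref{thm:bitgain}: one representative is reserved per occupied sub-prefix and the remaining budget is filled uniformly at random. When $j\ge 1$ the responsible node is the closest of the $j$ nodes in the sub-prefix of $x$, so it survives in one of two disjoint ways. Either it is itself the reserved representative of that sub-prefix, which---taking the representative to be uniform among the $j$ candidates---occurs with probability $1/j$ and guarantees retention; or, with probability $1-1/j$, it must instead win one of the uniform fill slots. For the latter event I would condition on the number $i$ of the other $2^q-1$ sub-prefixes that are occupied, since this fixes the number of reserved representatives and hence both the number of fill slots and the size of the competing pool; the retention probability then takes the capped ratio form $\min\{1,(\text{slots})/(\text{pool})\}$ of the $\rho_j$ summand, and averaging it against the occupancy distribution of $i$ yields the stated sum.

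The main obstacle is the exact evaluation of this inner average, which accounts for the $\approx$ in the statement and splits into two delicate pieces. First, I must pin down the distribution of the occupancy count $i$: a stars-and-bars (Bose--Einstein) count distributes the $m-j$ remaining placements over the $2^q-1$ boxes in ${m-j+2^q-2 \choose 2^q-2}$ ways in total, and, up to the usual off-by-one boundary convention, yields the fraction ${2^q-1 \choose i}{m-j \choose i-1}/{m-j+2^q-2 \choose 2^q-2}$ appearing in $\rho_j$. Second, I must translate a given occupancy into the correct slot and pool sizes: the competing pool has size $m-i-1$ after removing the $i+1$ reserved representatives, while the number of positions effectively open to the responsible node, once the protection afforded by the reserved diversity slots is folded in, is $k+2^q-i-1$, producing the cap $\min\{1,(k+2^q-i-1)/\max\{1,m-i-1\}\}$. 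Getting this bookkeeping exactly right---in particular justifying the reserved-slot accounting and arguing that the occupancy count is a faithful approximation of the independent-uniform identifier assumption---is the crux; once both pieces are in place, substituting back through the $j$-sum and then the $m$-sum assembles the claimed expression.
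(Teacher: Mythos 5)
Your plan reproduces the paper's own proof essentially step for step: the same outer binomial sum over the region count $m$ inherited from Proposition~\ref{prop:s1}, the same binomial conditioning on the number $j$ of nodes sharing the $q$-bit sub-prefix of $x$ (with the $j=0$ term reducing to $\min\{1,k/m\}$), the same split of the $j\geq 1$ case into the reserved-representative event of probability $1/j$ versus the complementary event where the responsible node must win a free slot, and the same stars-and-bars count for the occupancy $i$ yielding ${2^q-1 \choose i}{m-j \choose i-1}/{m-j+2^q-2 \choose 2^q-2}$ together with the capped slot-to-pool ratio. The looseness you flag at the end (the slot bookkeeping and the fact that the stars-and-bars occupancy law only approximates independent uniform placement) is precisely where the paper's $\approx$ lives as well, so there is no substantive difference between the two arguments.
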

\begin{proof}
As in the proof of Proposition \ref{prop:s1}, we summarize over the possible number
of nodes $m$ in the same region as $x$. Furthermore, we summarize over the number $j$ of the $m$ nodes that share additional 
$q$ bits with $x$. $j$ is binomially distributed with parameters $m$ and the
$2^{-q}$, since there are $2^q$ different prefixes. 
If $j=0$, any of the $m$ nodes with a different prefix is responsible for $x$, corresponding
 to the term $\binom(m,0,2^{-q})\min\{1, k/m\}$.
If $j>0$, at least one node with the same prefix as $x$ is contained in the bucket, so the
probability that the responsible node is chosen by the one link that is guaranteed to choose
a node with such a prefix  is $1/j$.
In addition, there are links that do not have to be addressed to any specific of the $2^q$ 
prefixes if $2^q < k$ or there are prefixes that do not correspond to any existing node identifier. 
Let $i$ be the number of prefixes other than the prefix of $x$, for which there is at least one node.
Then there are $m-i-1$ nodes that are not chosen by prefix-specific links, as well as 
$k-2^q-i-1$ links that can go to any of those nodes. So if the responsible node is not chosen
because of its common prefix with $x$ (probability $1-1/j$), it can still be chosen with
probability  $\min\left\lbrace 1,\frac{k+2^{q}-i-1}{\max\{1,m-i-1\}} \right\rbrace$.
It remains to show that the probability that there are $i$ prefixes for which there is a node
is 
\begin{align*}
\frac{{2^q-1 \choose i}{m-j \choose i-1}}{{m-j+2^q-2 \choose 2^q-2}}. 
\end{align*}
Basically, we consider the problem of dividing $m-j$ nodes in $2^q-1$ equally likely regions.
An equivalent problem is to arrange $2^q-2$ borders in a set of $m-j$ elements, assuming
that all elements before the $i$-th border are assigned to region $i$, and the elements
after the border $2^q-2$ are assigned to region $2^q-1$.
So there are no elements within region $i$ if border $i+1$ follows directly after border $i$.
%We show the above probability by induction on $i$ for the equivalent problem.
Using this equivalent problem formulation, the above term follows from basic
combinatoric. 
Then the total number of possibilities to arrange the $2^q-2$ borders within a set of 
$m-j+2^q-2$ objects is ${m-j+2^q-2 \choose 2^q-2}$.
The factor ${2^q-1 \choose i}$ refers to the possible arrangements of the borders
into $i$ sets without any elements between borders of the same set.
Similarly, the factor ${m-j \choose i-1}$ gives the number of
possibilities to divide $m-j$ elements into $i$ non-empty sets.\end{proof}

\subsection{Maximizing the Diversity} \label{sec:modeltransition}
%\textcolor{red}{When routing does not determined the next nodes are chosen as the one closest to the target identifier $x$ regardless if $x$ corresponds to an node identifier.} 
The standard contact selection has been treated in \cite{roos13comprehending}, thus we here only consider our modified scheme. We first consider the case that only $\beta=2$ closest nodes are of interest, and then extend the result to $\alpha=3$ which is needed to determine the initial distribution.

%The idea of the proofs is the same, just the second case is more complex.
Consider the case that $k=2^q$ and there are nodes for all the $2^q$ prefixes in the bucket.
Then there is exactly one contact in the bucket that has bit distance of less than
$d-l-q$ to $x$, and the node is selected uniformly at random from all these identifiers.
The second closest contact is then the one that is at bit distance $d-l-q$, i.e. the one
for which the last bit of the $q$-bit prefix is different. The third closest contact is chosen to be at bit distance $d-l-q+1$, not sharing the last two bits of the prefix with $x$. However, if $k>2^q$ or there are prefixes with no matching nodes, more than one contact can have a bit distance less than $d-l-q$. Furthermore, the next closest contacts not within distance $d-l-q$ can be farther than for the standard contact selection. 
Proposition \ref{prop:s3} and \ref{prop:s4} evaluates how likely are these scenarios, summarizing over all possible number of prefixes without matching nodes for the $\beta=2$ and $\alpha=3$ closest contacts, respectively.
\vspace{3pt}
\begin{proposition}
\label{prop:s3}
The probability that the two closest nodes to $x$ in the routing table of a node $v$
at bit distance $D=d$ to $x$ are at distances $(\delta_1,\delta_2)$ to $x$ is
\begin{align}
\label{eq:beta1}
&P\left(C^{own}=(\delta_1, \delta_2)|D=d,L_d=l, C \neq \found\right) = 0
\end{align}
if $\delta_1 \geq d-l-q$, and setting $\eta=d-l-\delta_2$
\begin{align}
\label{eq:beta2}
%\begin{split}
&P\left(C^{own}=(\delta_1, \delta_2)|D=d,L_d=l,C \neq \found\right) \nonumber \\
 \approx \sum_{r=0}^{2^q-1}&\binom(2^q-1,r,em(\max\{d-l-q-1,0\})) \nonumber \\
\cdot &\binom(r+k-2^q,0,1/(2^q-r)) \\
\cdot &\best_1((\delta_1),\max\{d-l-q-1,0\},1) \nonumber \\
%&\begin{cases}
%0, & r < 2^{q-\eta}-1\\
&\over(r,2^{q}-1,2^{q-\eta}-1)-\over(r,2^{q}-1,2^{q-\eta+1}-1) \nonumber 
%&r \geq  2^{q-\eta}-1
%\end{cases}
%\end{split}
\end{align}
if $\delta_2 \geq d-l-q > \delta_1$, and
\begin{align}
\label{eq:beta3}
\begin{split}
&P\left(C^{own}=(\delta_1, \delta_2)|D=d,L_d=l,C \neq \found\right) \\
\approx \sum_{r=0}^{2^q-1}&\binom(2^q-1,r,em(\max\{d-l-q-1,0\}))\\
\cdot &\sum_{a=0}^{r+k-2^q} \binom(r+k-2^q,a,1/(2^q-r))\\
&\quad \cdot \best_2((\delta_1,\delta_2),\max\{d-l-q-1,0\},a+1)
\end{split} 
\end{align}
if $\delta_2 < d-l-q$.
\end{proposition}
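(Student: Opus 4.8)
The plan is to compute the joint law of the two closest contacts by first fixing the combinatorial skeleton that the diversity scheme imposes and then averaging over the sources of randomness. Concretely, the bucket region of $2^{d-l}$ identifiers is split by the $q=\lfloor\log k\rfloor$ diversity bits into $2^q$ prefix-slots: the slot whose $q$ bits agree with $x$ (the \emph{matching slot}, at bit distance $<d-l-q$) and $2^q-1$ further slots, of which $2^j$ sit at bit distance $d-l-q+j$. Maximizing diversity means the scheme fills one contact into each non-empty slot before distributing the remaining $k-(2^q-r)$ contacts, where $r$ is the number of empty slots. I would therefore make $r$ the outermost summation variable and model it as $\binom(2^q-1,r,em(\max\{d-l-q-1,0\}))$, i.e. each of the $2^q-1$ non-matching slots is independently empty with the probability that its associated region contains no node.

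First I would dispose of \eqref{eq:beta1}. Conditioning on $C\neq\found$ excludes exactly the situation folded into Proposition \ref{prop:s2} together with the empty-bucket approximation of Section \ref{sec:approximation}, namely that the matching slot is empty; hence under $C\neq\found$ the matching slot always contains a contact at bit distance $<d-l-q$, which forces $\delta_1<d-l-q$ and makes $\delta_1\geq d-l-q$ impossible. For \eqref{eq:beta2} I would argue that $\delta_2\geq d-l-q>\delta_1$ occurs precisely when the matching slot holds exactly one contact and the second contact is the closest contact in the non-matching slots. The ``exactly one'' requirement means none of the $r+k-2^q$ surplus contacts (the $k-2^q$ baseline surplus plus the $r$ contacts freed by empty slots) lands in the matching slot; since a surplus contact hits it with probability $1/(2^q-r)$, this contributes $\binom(r+k-2^q,0,1/(2^q-r))$. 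The value $\delta_1$ is then the bit distance of a single uniform node in the matching region, i.e. $\best_1((\delta_1),\max\{d-l-q-1,0\},1)$. Finally, $\delta_2=d-l-q+j$ requires every slot strictly closer than $\delta_2$ (there are $2^{q-\eta}-1$ of them, $\eta=d-l-\delta_2$) to be empty while at least one slot at exactly $\delta_2$ is non-empty; writing this as the probability that the $r$ empty slots cover the closest $2^{q-\eta}-1$ slots minus the probability they cover the closest $2^{q-\eta+1}-1$ slots yields $\over(r,2^q-1,2^{q-\eta}-1)-\over(r,2^q-1,2^{q-\eta+1}-1)$.

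For \eqref{eq:beta3} the event $\delta_2<d-l-q$ means both closest contacts lie in the matching slot, so I would additionally sum over the number $a$ of surplus contacts that fall into it, each with probability $1/(2^q-r)$, giving $\binom(r+k-2^q,a,1/(2^q-r))$; the matching slot then holds $a+1$ contacts whose two closest bit distances are distributed as $\best_2((\delta_1,\delta_2),\max\{d-l-q-1,0\},a+1)$. Assembling the three cases and invoking the independence of slot occupancies (Assumptions 2 and 3) produces the stated expressions.

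The main obstacle I anticipate is bookkeeping the surplus contacts and the slot-emptiness pattern consistently: I must justify that, after conditioning on $r$ empty slots, the $r+k-2^q$ surplus contacts are distributed uniformly and independently over the $2^q-r$ non-empty slots with per-slot probability $1/(2^q-r)$, and that the emptiness of distinct slots may be treated as independent events of probability $em(\max\{d-l-q-1,0\})$ even though the slots partition one fixed identifier region. Reconciling the half-region size $2^{d-l-q-1}$ appearing in $em(\cdot)$ and in the $\best$-arguments with the nominal slot size, and controlling the resulting error (the source of the $\approx$), is the delicate point; everything else reduces to the combinatorial identities for $\over$ and the binomial occupancy model.
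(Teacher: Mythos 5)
Your proposal follows essentially the same route as the paper's proof: the outer binomial sum over the number $r$ of empty prefixes (with the same explicitly flagged independence approximation), the surplus contacts $k-2^q+r$ landing in the matching prefix with probability $1/(2^q-r)$, the $\best_1$/$\best_2$ terms for contacts inside the matching region, and the difference of $\over$-terms for the closest occupied prefix in the case $\delta_2 \geq d-l-q > \delta_1$, with Eq.~\eqref{eq:beta1} justified as in the paper by the guaranteed contact in the matching slot. The only slip is in your descriptive setup --- there are $2^{j-1}$, not $2^j$, prefix-slots at bit distance $d-l-q+j$ --- but since your subsequent counting uses the correct cumulative counts $2^{q-\eta}-1$ and $2^{q-\eta+1}-1$, this does not affect the derivation.
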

\begin{proof}
Eq. \ref{eq:beta1} holds since at least one node within bit distance $d-l-q-1$ of
$x$ is chosen.
For the remaining two cases, we summarise over the number $r$ of prefixes
without any matching node. These are approximately binomially distributed with parameters $2^q-1$ for the number of other prefixes, and the probability
that there is no node with the respective common prefix (strictly speaking the
probability that common prefixes are not taken by a node are not independent of,
hence the approximation).
Given $r$, the number of additional contacts $a$ within distance at least
$d-l-q-1$ is binomially distributed with parameters $k-2^q+r$, the number
of potential additional contacts, and $1/(2^q-r)$, the probability that exactly
the prefix of $x$ is also a prefix of the identifier of the contact.
Eq. \ref{eq:beta2} considers the case $a=1$: 
If only one link leads to a node with the same prefix as $x$, the 
first at distance $\delta_1$ with probability $\best_1((\delta_1),\max\{d-l-q-1,0\},a+1)$. 
The second node is chosen as the closest of the $2^q-1$ remaining prefixes. %%that is the prefix of a node identifier.
Note that there are $2^{q-i}-1$ prefixes that agree with $x$'s
prefix in the first $i$ bits. 
If the closest prefix of an existing node identifier is at distance $\delta_2$, we have
$i=\eta+1$, and all closer $2^{q-\eta}-1$ prefixes are chosen from the $r$ prefixes
without any node identifier, but the closest $2^{q-\eta+1}-1$ are not chosen from
those $r$.  The probability that $m \in \{2^{q-\mu}-1, 2^{q-\mu+1}\}$ prefixes are chosen from a set of $r$
given that there are $2^q-1$ prefixes to choose from is given by
$\over(r,2^q-1,m)$, so that Eq. \ref{eq:beta2} follows.
If $a$, the number of additional contacts within distance at least $d-l-q-1$, is at least $1$, there are at least two nodes within distance $d-l-q-1$ of $x$,
and the probability distribution of their distance is given by 
$\best_2((\delta_1,\delta_2),\max\{d-l-q-1,0\},a+1)$ as derived in \cite{roos13comprehending}.
This accounts for Eq. \ref{eq:beta3} and completes the proof.\end{proof}

\vspace{3pt}
\begin{proposition}
\label{prop:s4}
Assume that the prefix diversity of the routing table entries is maximized.
The probability that the three closest nodes to $x$ in the routing table of a node $v$
at bit distance $D=d$ to $x$ are at distances $(\delta_1,\delta_2, \delta_3)$ to $x$ is
\begin{align}
\label{eq:alpha1}
&P\left(C^{own}=(\delta_1, \delta_2, \delta_3)|D=d,L_d=l\right) = 0
\end{align}
if $\delta_1 \geq d-l-q$, and setting $\eta_i=d-l-\delta_i$, 
as well as
\begin{align*}
p(r) = \frac{{2^q -1-r \choose 1}{r - 2^{q-\eta_2}+1 \choose 2^{q-\eta_2}-1}}
{{2^q-2^{q-\eta_2} \choose 2^{q-\eta_2}}},
\end{align*}
\begin{align}
\label{eq:alpha2}
%\begin{split}
&P\left(C^{own}=(\delta_1, \delta_2,\delta_3)|D=d,L_d=l\right) \nonumber\\
 =& \sum_{r=0}^{2^q-2}\binom(2^q-1,r,em(\max\{d-l-q-1,0\})) \nonumber\\
 &\quad\cdot \binom(r+k-2^q,0,1/(2^q-r)) \nonumber\\
&\quad \cdot \best_1((\delta_1),\max\{d-l-q-1,0\},1)\\
&\quad \cdot \over(r,2^q-1,2^{q-\eta_2}-1) \nonumber\\
&\begin{cases}
%0, & r < 2^{q-\eta_3}-2\\
%&\over(r,2^q-1,2^{q-\eta_2}-1)\\
& \left[ 
\left(1-\over(r-2^{q-\eta_2}+1,2^q-2^{q-\eta_2}, 2^{q-\eta_2})-p(r)\right) \right. \\
&\left. +p(r)(1-\binom(k-2^q+r,0,\frac{1}{2^q-r-1})\right], \\
&\text{if }\eta_2=\eta_3 \\
%\over(r,2^q-1,2^{q-\eta_2}-1),
&p(r) \binom(k-2^q+r,0,\frac{1}{2^q-r-1}) \\
&\left(\over(r-2^{q-\eta_2}+2,2^q-2^{q-\eta_2+1}, \sum_{i=q-\eta_2+2}^{q-\eta_3}2^i) \right. \\
& \left. -  \over(r-2^{q-\eta_2}+2,2^q-2^{q-\eta_2+1}, \sum_{i=q-\eta_2+2}^{q-\eta_3+1}2^i)\right) \\
&\text{if }\eta_2\neq \eta_3 \\
\end{cases}
\nonumber
%\end{split}
\end{align}
if $\delta_3 \geq \delta_2 \geq d-l-q > \delta_1$, 
\begin{align}
\label{eq:alpha3}
\begin{split}
&P\left(C^{own}=(\delta_1, \delta_2, \delta_3)|D=d,L_d=l\right) \\
=&\sum_{r=0}^{2^q-2}\binom(2^q-1,r,em(\max\{d-l-q-1,0\}))\\
\cdot & \binom(r+k-2^q,1,1/(2^q-r))\\
\cdot &\best_2((\delta_1,\delta_2),\max\{d-l-q-1,0\},2) \\
%&\begin{cases}
%0, & r < 2^{q-\eta_3}-1\\
\cdot &\left(\over(r,2^{q}-1,2^{q-\eta_3}-1)-\over(r,2^{q}-1,2^{q-\eta_3+1}-1)\right)
%&r \geq  2^{q-\eta_3}-1
%\end{cases}
\end{split} 
\end{align}
if $\delta_2 < d-l-q\leq \delta_3$, and
\begin{align}
\label{eq:alpha4}
\begin{split}
&P\left(C^{own}=(\delta_1, \delta_2)|D=d,L_d=l\right) \\
=&\sum_{r=0}^{2^q-1}\binom(2^q-1,r,em(\max\{d-l-q-1,0\}))\\
\cdot &\sum_{a=2}^{r+k-2^q} \binom(r+k-2^q,a,1/(2^q-r))\\
&\quad \cdot \best_3((\delta_1,\delta_2,\delta_3),\max\{d-l-q-1,0\},a+1)
\end{split} 
\end{align}
if $\delta_3 < d-l-q$.
\end{proposition}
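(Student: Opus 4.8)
The plan is to extend the two-contact computation of Proposition~\ref{prop:s3} to the three closest contacts using the same two-stage decomposition. I would first sum over $r$, the number of the $2^q-1$ non-matching diversity prefixes that contain no node; as in Proposition~\ref{prop:s3} these are treated as binomially distributed with parameters $2^q-1$ and the per-prefix emptiness probability $em(\max\{d-l-q-1,0\})$. Conditioned on $r$, I would sum over $a$, the number of the $k-2^q+r$ surplus contacts that land in $x$'s own diversity prefix; $a$ is binomial with parameters $k-2^q+r$ and $1/(2^q-r)$. The guaranteed prefix-matching contact together with these $a$ surplus contacts populate the inner region (bit distance $<d-l-q$), whose internal distance distribution is $\best_\gamma(\cdot,\max\{d-l-q-1,0\},a+1)$ with $\gamma=a+1$, while each remaining contact occupies one of the non-empty outer prefixes at a bit distance fixed by how many diversity bits its prefix shares with $x$.

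The four-way split in the statement is precisely a split on how many of the three closest contacts are inner. For \eqref{eq:alpha1} the probability is $0$ by the argument of \eqref{eq:beta1}: whenever a node carrying $x$'s full $q$-bit prefix exists it is selected under maximal diversity, forcing $\delta_1<d-l-q$, and the complementary event is exactly the terminal case $C=\found$ treated in Proposition~\ref{prop:s2}, so no non-terminal state with $\delta_1\geq d-l-q$ gets positive weight. For \eqref{eq:alpha4} all three contacts are inner, requiring $a\geq2$, and I would read the triple directly off $\best_3((\delta_1,\delta_2,\delta_3),\max\{d-l-q-1,0\},a+1)$, giving the stated sum over $a\geq2$. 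For \eqref{eq:alpha3} exactly two contacts are inner, which forces $a=1$; here I would pair $\best_2((\delta_1,\delta_2),\cdot,2)$ for the two inner distances with the occupancy probability $\over(r,2^q-1,2^{q-\eta_3}-1)-\over(r,2^q-1,2^{q-\eta_3+1}-1)$ that the closest non-empty outer prefix sits at $\delta_3$, exactly mirroring \eqref{eq:beta2}.

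The real work is the one-inner case \eqref{eq:alpha2} ($\delta_2\geq d-l-q>\delta_1$), where only the guaranteed contact is inner (so $a=0$, enforced by $\binom(r+k-2^q,0,1/(2^q-r))$ and $\best_1((\delta_1),\cdot,1)$) and the second and third closest contacts occupy the two nearest non-empty outer prefixes. I would place the second contact at prefix-agreement $\eta_2$ through $\over(r,2^q-1,2^{q-\eta_2}-1)$, encoding that all $2^{q-\eta_2}-1$ strictly closer prefixes are empty, and then fix $\delta_3$ by a further case distinction. When $\eta_2=\eta_3$ the third contact is at the same distance level, which arises either from a second occupied prefix at that level (the complementary occupancy term $1-\over(r-2^{q-\eta_2}+1,2^q-2^{q-\eta_2},2^{q-\eta_2})-p(r)$) or from a surplus contact having fallen into the second contact's own prefix (the factor $p(r)(1-\binom(k-2^q+r,0,1/(2^q-r-1)))$). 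When $\eta_2\neq\eta_3$ the third contact is strictly farther, demanding no surplus contact in the second prefix (factor $\binom(k-2^q+r,0,1/(2^q-r-1))$) together with the occupancy difference placing the second-nearest occupied prefix at $\eta_3$. I expect the main obstacle to be the nested counting behind the auxiliary $p(r)$: following the hypergeometric reasoning of Proposition~\ref{prop:s2}, I would derive it as the conditional probability that exactly one occupied prefix lies at the $\eta_2$-level given the closer prefixes are empty, and keeping the index ranges and the interaction between empty-prefix occupancy and surplus-contact placement mutually consistent is the delicate step. Assembling the four cases over the $r$- and $a$-sums then completes the proof.
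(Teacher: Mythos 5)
Your proposal is correct and takes essentially the same approach as the paper: the paper likewise sums over $r$ and ties each case to the number $a$ of additional contacts in the inner region ($a=0$, $a=1$, $a\geq 2$ for Eqs.~\eqref{eq:alpha2}--\eqref{eq:alpha4}, with $\best_1$, $\best_2$, $\best_3$ for the inner distances), and resolves Eq.~\eqref{eq:alpha2} by the identical sub-case analysis --- for $\eta_2=\eta_3$ either at least two occupied prefixes at that level or exactly one (your $p(r)$) together with a surplus contact in it, giving $1-\binom(k-2^q+r,0,\frac{1}{2^q-r-1})$, and for $\eta_2\neq\eta_3$ the no-surplus factor times the occupancy difference placing the next occupied prefix at $\eta_3$. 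Your explicit hypergeometric reading of $p(r)$ and the four-way split by the number of inner contacts is just a more systematic phrasing of what the paper invokes by reference to ``the same combinatoric reasoning as for Eq.~\eqref{eq:beta2}''.
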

 \begin{proof}
 The only difference in contrast to returning two contacts is that potentially two nodes
 are not chosen from the set of nodes with bit distance at most $d-l-q-1$, i.e.
 Eq. \ref{eq:alpha2}.
 These can be either at the same bit distance to $x$ or not.
 In the first case, there are two possibilities: (\emph{i}) the prefixes of
 those two nodes are different. Then  at least two of the
 $2^{q-\eta_2}$ prefixes for nodes at distance $\delta_2$ are 
 prefixes to existing nodes. 
 The probability is computed by subtracting the probability that there
 are none or one such prefix from 1. (\emph{ii}) if the prefixes are the same, at least one of the $k-2^q+r$ remaining contacts 
  has the same prefix. Given that there is only one contact in the bucket with the
  same prefix as $x$, the $k-2^q+r$ contacts are chosen from $2^q-1-r$ prefixes,
  leading to the term $1-\binom(k-2^q+r,0,\frac{1}{2^q-r-1})$.
  This completes the first case in Eq. \ref{eq:alpha2}. If $\delta_2\neq \delta_3$, the probability is obtained by considering
  the events:
  \begin{itemize}
  \item $A_1$: there is exactly one prefix within the $2^{q-\eta_2+1}-1$ closest prefixes
  that is a prefix of a node, but there are no further contacts selected with that prefix,
  \item $A_2$: there are at least two nodes within the $2^{q-\eta_3}$ closest prefixes,
  \item $A_3$: there are at least two nodes within the $2^{q-\eta_3+1}$ closest prefixes.
  \end{itemize}
  The second part of the case distinction in Eq. \ref{eq:alpha2} then is given by
  $P(A_2)P(A_4\setminus A_3|A_2)$. %, which corresponds to the second case in Eq. \ref{eq:alpha2}. 
The actual terms follow by the same combinatoric reasoning as for Eq. \ref{eq:beta2}, conditioned   on the event that there are node identifiers with the closest $2^{q-\eta_2}-1$ prefixes. \end{proof}
  
This completes the derivation of our model. We validate the model against simulations of three Kademlia-type systems in Sec.~\ref{subsec:results_model_sim}.
%\begin{align}
%\begin{cases}
%%\parbox[t]{.6\textwidth}{\sum_{a=2}^{r+k-2^q} \binom(r+k-2^q,a,1/(2^q-r))\best_3((\delta_1,delta_2,\delta_3),%\max\{d-l-q-1,0\},a+1)} 
%\parbox[t]{.6\textwidth}{$$\gamma$$} \\
%1
%\end{cases}
%\end{align}

%%%%%%%%%%%%%%%%%%%%%%%%%%%%%%%%%%%%%%%%%%%%%%%%%%%%%%%%%%%%%%%%%%%%%%%%%%%%%%%%%%%%%%%%%%%%%%%%%%%%%%%%%%%%%%%%%%%%

\section{Evaluation} \label{sec:results}

In this Section, we aim to assess the impact of our modified neighbour selection scheme on the lookup performance in Kademlia-type systems. In particular, we answer the following three questions: (\emph{i}) How much improvement over the default lookup performance can be gained if all system nodes implement the new scheme?, and do simulations validate our model? %And how accurate are the predictions of our model?  
(Sec.~\ref{subsec:results_model_sim}), (\emph{ii}) What is the impact of churn on the gained  improvement? (Sec.~\ref{subsec:churn}), and (\emph{iii}) How much improvement can be gained by a partial deployment of the new scheme in a real Kademlia-type system? (Sec.~\ref{subsec:results_measure}). 

The \emph{lookup's hop count} here refers to the number of edges on the shortest path traversed during the lookup process. In Kademlia, each routing hop (i.e. step) represents a transition from a set of queried contacts to either another set of queried contacts or routing termination \cite{roos13comprehending}.

\subsection{Lookup Performance of a Full Deployment Without Churn: Model vs. Simulations}
\label{subsec:results_model_sim}

We discuss here the performance results of a full deployment for our approach (i.e. all system nodes incorporate the proposed neighbour selection scheme) as predicted by our model and validate them against simulations. We focus on three exemplary Kademlia-type systems: MDHT, iMDHT, and KAD, as they are described in Sec.~\ref{sec:kademlia}.

\vspace{3pt}
\subsubsection{\textbf{Simulation environment and setup}} Note that none of the well-known P2P simulators (e.g. PeerSim \cite{Montresor09peersim}, PeerfactSim.KOM \cite{stingl11peerfactsim}, or OverSim \cite{baumgart07oversim}) has exact implementations for the aforementioned three systems. However, the modular design of the widely-used simulator, OverSim, allows to easily add new P2P overlays. %Furthermore, being based on OMNeT++, OverSim has many other features \cite{baumgart07oversim}. 
We hence chose to develop MDHT, iMDHT, and KAD and their respective modified versions as new P2P overlays in OverSim. We use the source code of eMule as a basis for our implementation. Please note that eMule implements a \emph{loose parallel lookup} whereas our theoretical model assumes a
\emph{strict parallel lookup} (Sec.~\ref{subsec:assumptions}: Assumption 4). However, since our current model assumes no churn, the two lookup techniques shall perform similarly \cite{stutzbach06improving}. %\footnote{The simulation code is available at: https://www.p2p.tu-darmstadt.de/%\\fileadmin/user\_upload/Group\_P2P/share/kad\_simul.tar.gz} 

We performed simulations with three system sizes: 10,000, 15,000, and 20,000 nodes. %, both without churn and with churn\footnote{For more information about the churn models implemented in OverSim, the reader is referred to: http://oversim.org/wiki/OverSimChurn.}. 
At the beginning of simulations, nodes are added to the system until the target size is reached. After the system has stabilized (i.e. the nodes have populated their routing table buckets with contacts), the statistics of interests, i.e. the hop count and the hop count distribution, are obtained. All simulation results are averaged over ten runs.%During the stabilization phase, the nodes populates their buckets with contacts. %In simulations of the modified systems without churn, the nodes also try to achieve maximal diversity degrees during the stabilization phase, otherwise they cannot improve these degrees later on (since the maintenance messages, which the realization of our approach is based on, are not used if no churn exists). 

%with no churn. In the modified systems, the nodes fill their routing tables %, and experimented the three churn models that are available in OverSim: No Churn (NC), Lifetime Churn (LC), and Pareto Churn (PC).\footnote{For more information about the churn models implemented in OverSim, the reader is referred to: http://oversim.org/wiki/OverSimChurn.} We use simulations with no churn to validate the model predictions. %Note that event-based simulators like OverSim cannot scale to very large overlay sizes. 

\vspace{3pt}
\subsubsection{\textbf{Results}} Fig.\,\ref{fig:hop_count_dist} shows the resulting CDFs of hop count distributions for the three standard systems as well as for the respective modified ones, each with 10,000 nodes, both from the model predictions and from simulations. Table \,\ref{tab:average} shows the hop count values predicted by the model, and for simulations, the respective sample average values, the 95\% confidence intervals (using the Student's t-distribution), and the median values, in addition to hop count gain achieved by the modified scheme. The hop count gain values of simulations represent: (\emph{i}) the difference between $standard\_hop\_count - CI$ and $modified\_hop\_count + CI$, and (\emph{ii}) the minimum and maximum hop count gain values. These results represent the systems without churn (as assumed by the current model).

All in all, the results show the following: (\emph{i}) The three modified systems achieve improved hop counts, i.e. they outperform the respective standards systems, which confirms the utility of the proposed scheme. (\emph{ii}) The model predictions and simulation results are very close to each other, which indicates that both the model derivations and implementation of simulations are correct. These conclusions apply for the other experimented system sizes, that we exclude here due to space constraints. As for the impact of network size, the  average hop count increased in the larger sizes, which is to be expected. However,  the size had no large impact on the improvement gained by the new scheme.  

It can be seen also that the highest performance gain is achieved by iMDHT, whereas the lowest is achieved by KAD. More precisely, in this example with 10,000 nodes, iMDHT improves the hop count by a bit higher than 7\%, whereas KAD improves only about 1.5\%, and MDHT is in between by about 4.5\%. We attribute this disparity %the disparity in the performance gains achieved by the three systems 
to the different routing table structures in the three systems: On the one hand, KAD implements some form of diversity by default, as described in Sec.~\ref{subsec:idea}, %which allows it to achieve higher performance than MDHT and iMDHT \cite{roos13comprehending}. 
which limits the impact of the additional diversity enabled by the new scheme. On the other hand, MDHT and iMDHT do not have such feature in their default designs, and therefore they are expected to benefit from the new scheme more than KAD. In addition, the larger bucket sizes at the top four routing table levels (i.e. the most used ones) in iMDHT can contain more diverse contacts, and thus achieve a higher performance gain, than MDHT.

\begin{figure*}[ht]
\centering
\captionsetup{font=scriptsize}
\subfloat[MDHT]{\label{fig:out}\includegraphics[width=0.32\linewidth]{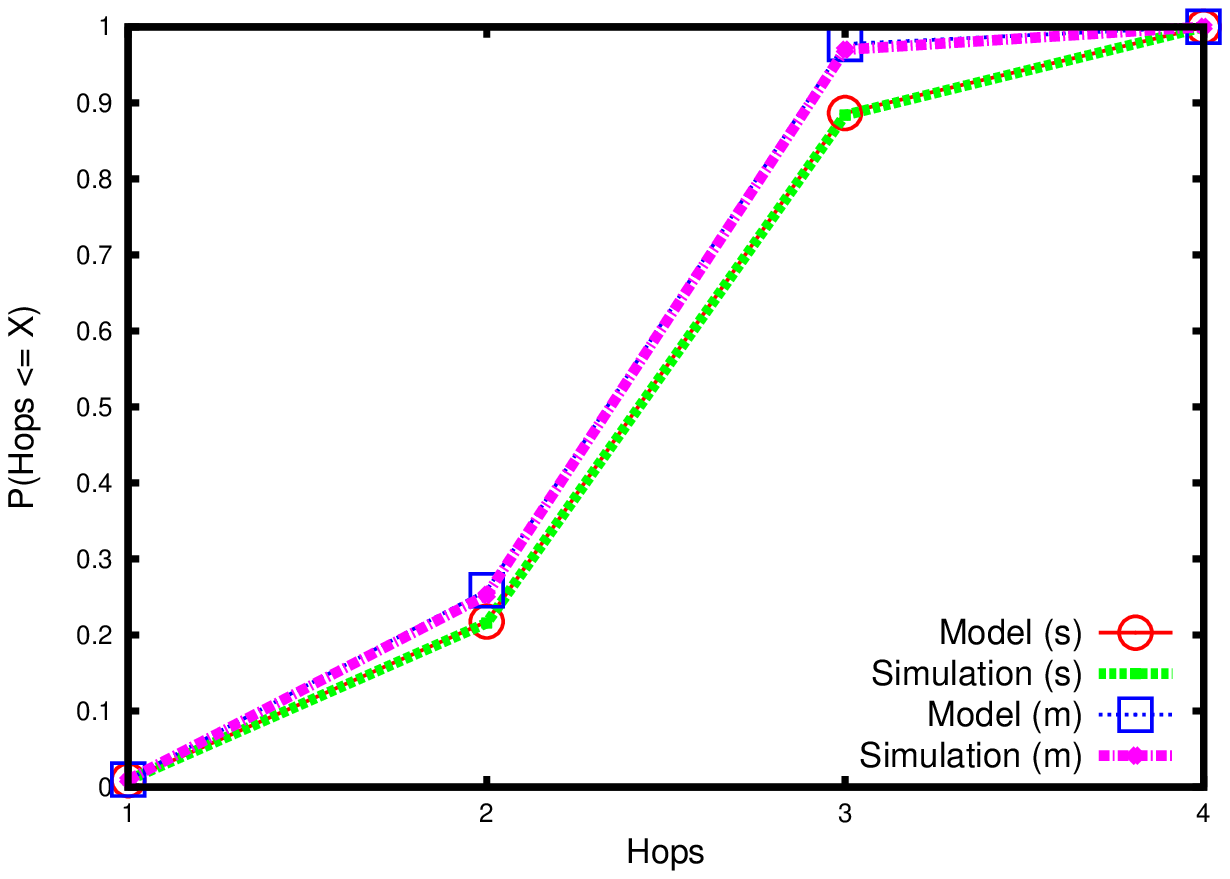}}
\subfloat[iMDHT]{\label{fig:in}\includegraphics[width=0.32\linewidth]{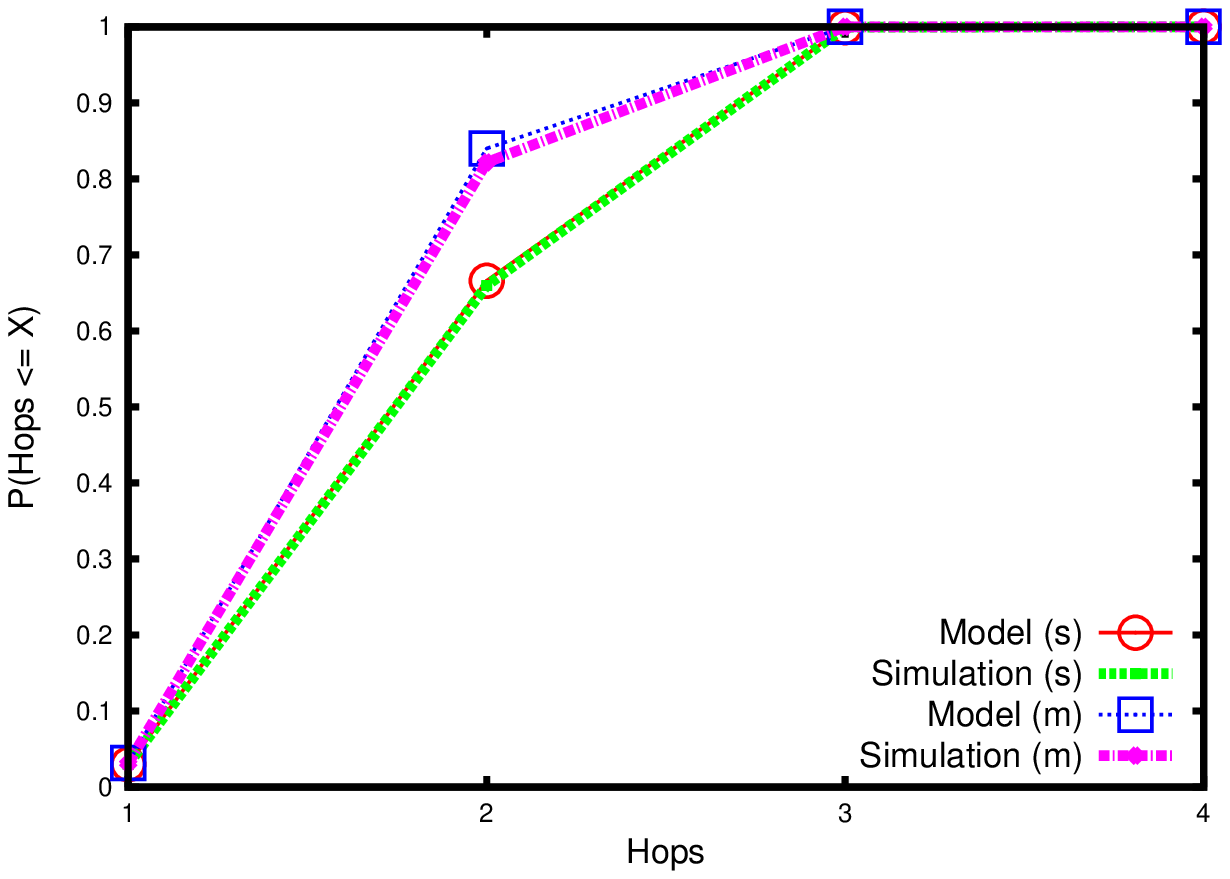}}
\subfloat[KAD]{\label{fig:dd}\includegraphics[width=0.32\linewidth]{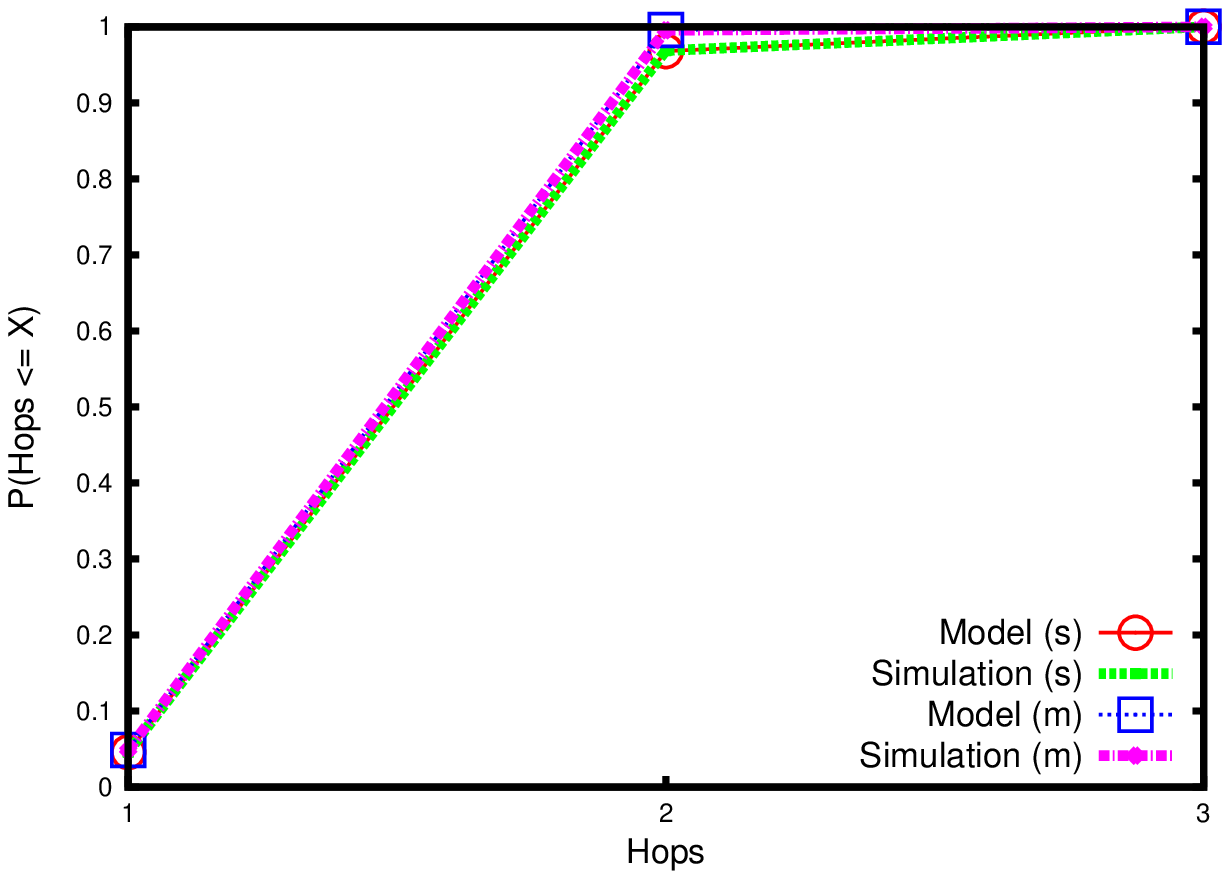}}
\caption{CDFs of hop count dist. (simulations vs. model expectations) for three exemplary Kademlia-type systems of size 10,000, without churn: (a) MDHT, (b) iMDHT, (c) KAD. \\(\textbf{s:} standard system; \textbf{m:} modified system)}
\label{fig:hop_count_dist}
%\vspace*{-1em}
\end{figure*}

%Table \,\ref{tab:average} shows the resulting average hop counts for the six systems, both for the model and for simulations. Furthermore, the 95\% confidence interval of the simulations, the variance between the model and simulations, as well as the achieved reduction in the average hop counts\schange{achieved by our approach}{}, are shown. All the three systems improve the average hop count when they incorporate our approach. However, the impact of the routing table structure on the results is obvious: iMDHT gains the most, with about about 8\% reduction on the average hop count, whereas KAD improves only a bit higher than 1.5\%, and MDHT is in between with about 4.5\%. This is to be expected since \textcolor{red}{xxxxx}.
%All in all, the results show: (\emph{i}) a strong agreement between the model and the simulations, indicating that the derivation and implementation are correct, and (\emph{ii}) that our approach indeed improves the lookup performance for all tested Kademlia variations, with results varying according to the applied routing table structure.

\begin{table*}[t]
  \captionsetup{font=scriptsize}
 \caption{Sample average hop count with 0.95 CI (t-value) and median values for simulations vs. model expectations for three exemplary Kademlia-type systems \\ of size 10,000, without churn: Standard systems, modified systems, and the achieved hop count gain of (+ : "\textit{modified + CI}" to "\textit{standard - CI}" and [min , max]).}
\centering 
\scriptsize 

 \begin{tabular}{|l|c|c|c||c|c|c||c|c|c|}
 \cline{2-10}
 \multicolumn{1}{c|}{} & \multicolumn{3}{c||}{\textbf{MDHT}} & \multicolumn{3}{c||}{\textbf{iMDHT}} & \multicolumn{3}{c|}{\textbf{KAD}} \\
 \cline{2-10}
  \multicolumn{1}{c|}{} & \multicolumn{2}{c|}{\textbf{Simulations}} & \multirow{2}{*}{\textbf{Model}} & \multicolumn{2}{c|}{\textbf{Simulations}} & \multirow{2}{*}{\textbf{Model}} & \multicolumn{2}{c|}{\textbf{Simulations}} & \multirow{2}{*}{\textbf{Model}} \\  
  \cline{2-3} \cline{5-6} \cline{8-9}
  \multicolumn{1}{c|}{} & \textbf{Sample Avg. $\pm$ CI} & \textbf{Median} &  & \textbf{Sample Avg. $\pm$ CI} & \textbf{Median} &  & \textbf{Sample Avg. $\pm$ CI} & \textbf{Median} &  \\
 \hline

\textbf{Standard} & 2.89185 $\pm$ 0.00019 & 2.89180 & 2.88697 & 2.31113 $\pm$ 0.00032 & 2.31121 & 2.30470 & 1.98609  $\pm$ 0.00028 & 1.98607 & 1.98609 \\

 \hline
 \textbf{Modified} & 2.76774 $\pm$ 0.00074 & 2.76755 & 2.75416 & 2.14716 $\pm$ 0.00106 & 2.14699 & 2.12828 & 1.95610 $\pm$ 0.00035 & 1.95621 & 1.95535 \\
 \hline
  \multirow{2}{*}{~~\textbf{+ (\%)}} & 4.32376 & \multirow{2}{*}{-} & \multirow{2}{*}{4.60025} & 7.15366 & \multirow{2}{*}{-} & \multirow{2}{*}{7.65508} & 1.54158 & \multirow{2}{*}{-} & \multirow{2}{*}{1.54760} \\
    \cline{2-2}\cline{5-5}\cline{8-8}
  & [4.22405 , 4.33964] & & & [7.00106 , 7.19749] & & & [1.45703 , 1.58457] & & \\
 \hline
 \end{tabular}
 \label{tab:average}
 \end{table*}

%%%%%%%%%%%%%%%%%%%%%%%%%%%%%%%%%%%%%%%%%%%%%%%%%%%%%%%%%%%%%%%%%%%%%%%%%%%%%%%%%%%%%%%%%%

\subsection{Impact of Churn} \label{subsec:churn}

We aim here to evaluate the impact of churn on the lookup performance of systems incorporating the new neighbour selection scheme. As mentioned in Sec.~\ref{sec:model}, the current version of our model supports only static scenarios (i.e. without churn). We hence perform the evaluations only by simulations. More precisely, the simulations apply the churn model proposed in \cite{yao2006modeling}, as implemented in OverSim. %\footnote{For details about churn models implemented in OverSim, the reader is referred to: http://www.oversim.org/wiki/OverSimChurn, and \cite{baumgart07oversim, baumgart2012fast}.}. 
We simulated with two different average session lengths: 20,000 seconds and 10,000 seconds. Table \,\ref{tab:average_churn} summarizes the results.

Comparing the sample average with the 95\% confidence intervals (using the Student's t-distribution) and median hop count values of the two simulation settings both to each others and to the static scenario (Table \,\ref{tab:average}), the results show the following: (\emph{i}) The results without churn are always better (i.e. achieve shorter hop counts), and those with the lower churn rate (i.e. with longer average session lengths of 20,000 seconds) outperform those with higher churn rate (10,000 seconds). These results are to be expected (e.g. see \cite{stutzbach06understanding, baumgart2012fast} for explanation). (\emph{ii}) More interestingly, it can be seen that for MDHT and KAD, the hop count gain %\footnote{Similar to Table \,\ref{tab:average}, the hop count gain (+) here also is represented by (\emph{i}) the difference between $standard\_hop\_count - CI$ and $modified\_hop\_count + CI$, and (\emph{ii}) the minimum and maximum hop count gain values.} 
increases with churn (only the gain from the new scheme, not the hop count). This can be explained as follows: Since the frequency of the routing table maintenance increases with higher churn rates, there is a chance in that case to discover more contacts, and hence to increase the buckets' diversity degrees. KAD achieves the highest improvement because %it offers  
its routing table size is not a power of 2. As for iMDHT, the observation above does not hold, i.e. the churn has almost no impact on the results. The reason for this can probably be found in the extremely large bucket size on the high levels: It takes very long to fill these buckets, and hence it is hard to keep the contacts alive under churn, so that the high stale entry rate impairs further improvement.

\begin{table*}[t]
  \captionsetup{font=scriptsize}
 \caption{Median and sample average hop count with 0.95 CI (t-value) for simulations of size 10,000, with churn (2 different average session lengths): \\Standard systems, modified systems, and the achieved hop count gain (+ : "\textit{modified + CI}" to "\textit{standard - CI}" and [min , max]).}
\centering 
\scriptsize 

\begin{tabular}{|l|c|c||c|c||c|c|}
 \cline{2-7}
 \multicolumn{1}{c|}{} & \multicolumn{2}{c||}{\textbf{MDHT}} & \multicolumn{2}{c||}{\textbf{iMDHT}} & \multicolumn{2}{c|}{\textbf{KAD}} \\
 \hline
  ~~~~~~~~\textbf{Avg. session time} & \textbf{20,000 sec.} & \textbf{10,000 sec.} & \textbf{20,000 sec.} & \textbf{10,000 sec.} & \textbf{20,000 sec.} & \textbf{10,000 sec.} \\
 \hline
 ~~~~~~~\textbf{Standard (Median)} & 3.21465 & 3.32382 & 2.57720 & 2.69044 & 2.21886 & 2.31689 \\
 \hline
 ~~~~~~~\textbf{Modified (Median)} & 3.01307 & 3.11266 & 2.38905 & 2.49321 & 2.09163 & 2.17313 \\
 \hline
\textbf{Standard (Sample Avg. $\pm$ CI)} & 3.21380 $\pm$ 0.00109 & 3.32362 $\pm$ 0.00092 & 2.57716 $\pm$ 0.00029 & 2.69084 $\pm$ 0.00183 & 2.21842 $\pm$ 0.00103 & 2.31668 $\pm$ 0.00121 \\
 \hline
\textbf{Modified (Sample Avg. $\pm$ CI)} & 3.01311 $\pm$ 0.00025 & 3.11264 $\pm$ 0.00203 & 2.38908 $\pm$ 0.00161 & 2.49309  $\pm$ 0.00128 & 2.09151 $\pm$ 0.00580 & 2.17308 $\pm$ 0.00182 \\
 \hline
  \multirow{2}{*}{~~~~~~~~~~~~~~\textbf{+ (\%)}} & 6.20486 & 6.26100 & 7.22454 & 7.23849 & 5.41489 & 6.07047 \\
  \cline{2-7}
  & [6.16229 , 6.30019] & [6.19498 , 6.49519] & [7.14763 , 7.44761] & [7.22199 , 7.49043] & [4.77028 , 6.19054] & [6.00419 , 6.34748] \\
 \hline
 \end{tabular}
 \label{tab:average_churn}
 \end{table*}
 
%%%%%%%%%%%%%%%%%%%%%%%%%%%%%%%%%%%%%%%%%%%%%%%%%%%%%%%%%%%%%%%%%%%%%%%%%%%%%%%%%%%%%%%%%%

\subsection{Lookup Performance of a Partial Deployment: Measurements of Modified KAD Clients} \label{subsec:results_measure}

We provide here an additional evaluation for the impact of the new scheme on the lookup performance by measuring it in a real Kademlia-type system. However, a large-scale deployment of the new scheme in a lifelike Kademlia-type system does not exist. Therefore, measuring the performance gain of a full deployment is infeasible. Instead, we measured the lookup performance on modified nodes (implementing the new scheme), during their participation in a standard Kademlia-type system. In this setting, the modified nodes are expected to benefit from the modified scheme, but the 
benefit is less pronounced as for a complete implementation.

%that can be achieved by modified nodes connected to a standard system. 

%\textbf{Measurement setup:} We performed the measurements on six different software clients, all connected to the KAD system: Two clients apply the routing table of MDHT, two  apply iMDHT's table, and two apply KAD's table. One client from each group incorporates our approach.   During a measurement run, each client issues 500 lookup requests (1 request every 3 seconds). The target identifiers are selected such that they are uniformly distributed over the identifier space.\footnote{We select 500 keywords from the list that is used in \cite{steiner10evaluating}.}  

\vspace{3pt}
\subsubsection{\textbf{Measurement environment and setup}} We performed our measurements in KAD, using two clients: the first uses the standard KAD code as implemented in the eMule software%%\footnote{http://www.emule-project.net/}
, while the second implements the new neighbour selection scheme. 

During each measurement run, each client issued 500 lookup requests, one request every three seconds. The target identifiers are selected such that they are uniformly distributed over the identifier space. In particular, we used 500 keywords from the list of Steiner et al. \cite{steiner10evaluating}. %\footnote{We used 500 keywords from the list of Steiner et al. \cite{steiner10evaluating}.}.
During measurements, the clients recorded the following information: (\emph{i}) the number of hops traversed by each lookup request, and (\emph{ii}) the diversity degrees of routing table buckets at level 4. % at the end of each run
As described in Sec.~\ref{subsec:real_diverse}, those buckets, on average, are the mostly used ones.

\vspace{3pt}
\subsubsection{\textbf{Results}} 

We performed 40 measurement runs at several times of the day. %%Furthermore, the results we obtained for the two standard clients are very close to each other, and the same applies for the results of the two modified clients. 
%%Due to space constraints, we show here results from only one exemplary measurement run. % representing one standard client and one modified client. 
%%Nevertheless, results of all measurement runs were highly similar.
Fig.\,\ref{fig:all_hop_count_measure_dist} shows the CDFs of hop count distributions both for the standard client and for the modified client. %%, for one exemplary run. 
The median values for the standard client and the modified client were: $3.39669$ hops and $3.22624$ hops, respectively. The corresponding sample average with the 95\% confidence intervals (using the Student's t-distribution) were: $3.39862$ hops ($\pm 0.0.01583$) and $3.23172$ hops ($\pm 0.01310$). In this example, the achieved hop count reduction (computed as the difference between $standard\_hop\_count - CI$ and $modified\_hop\_count + CI$) is $4.07861\%$, with a minimum of $3.36571\%$ and a maximum of $7.68003\%$. 

Unfortunately, the measurement results cannot be compared to model predictions because the model does not support churn, nor to simulations because the simulator cannot scale to relatively very large sizes like the size of the real KAD system\footnote{Recently, \cite{salah13capturing} counted more than 300,000 concurrent online nodes in KAD.}. Since only our modified client implements the new neighbour selection scheme, while the rest are standard clients, one can expect the gain of the new scheme to be lower than what is achieved. However, when looking on the diversity degrees, the modified client could improve the diversity degrees of its buckets\footnote{For the reasons that we mentioned in Sec.~\ref{subsec:real_diverse}, we restricted our analysis here also only to the buckets located at the fourth routing table level.}: The modified client achieved a diversity degree of $7.37$ (which is very close from the maximal value), on average, compared to $4.68$ for the standard client (see also the results of measured diversity degrees of a large sample of standard clients in Sec.~\ref{subsec:real_diverse}: Fig.\,\ref{fig:measured_unif_cdf}). This increase in the diversity degrees can explain the achieved improvement, and confirms the utility of increasing the diversity. 

We attribute the obtained increase in the diversity degrees to the expected high frequency of performing the routing table maintenance processes in the real system, due to the high ratio of stale routing table entries. For instance, in \cite{salah14characterizing} we reported 88\% as the ratio of stale routing table entries, computed on almost complete and instantaneous graph snapshots of KAD. As explained in Sec.~\ref{subsec:churn}, the higher the churn rate in KAD (i.e. more frequent maintenance for routing tables) the higher the chance to discover new neighbours that increase the buckets' diversity degrees. %%The achieved high diversity degree can explain the achieved improvement, and confirms the utility of increasing the diversity. 
%%This confirms the correlation between the diversity degree and the hop count, thus confirm the utility of the new scheme. 

%%%Nevertheless, to better understand the impact of churn and network size, thus the measurement results, it remains to extend the current model to include churn as well as the partial deployment scenario (where only a certain ratio of nodes implement the modified scheme) in it. Those extensions are among our agenda for future work. 

%%As for the diversity degrees, the standard client achieved an average of $4.73$, compared to $7.18$ for the modified client. These results confirm the correlation between the diversity degree and the hop count, thus confirm the utility of the new scheme.

\begin{figure} \centering
\captionsetup{font=scriptsize}
      \includegraphics[width=0.95\linewidth, height=0.18\textheight]{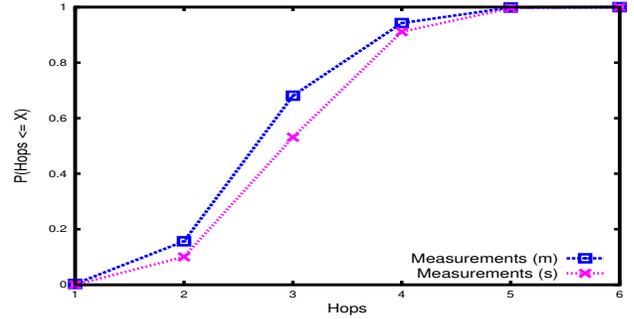} 
   \caption{CDFs of hop count distributions from measurements of standard and modified KAD clients. (\textbf{s:} standard client; \textbf{m:} modified client)}
   \label{fig:all_hop_count_measure_dist}
\end{figure}

\section{Conclusion} \label{sec:conclusion}
%%We studied the impact of identifiers' diversity within Kademlia's routing table buckets on the lookup performance. 
We proposed, modelled, and evaluated a new neighbour selection scheme for Kademlia-type systems. It aims to improve the lookup performance, almost without extra cost, by only attempting to maximize the identifiers' diversity within each routing table bucket.  %%We consequently recommend to incorporate our scheme in upcoming Kademlia implementations.

%We presented the design, model, and evaluation of a new, yet backward-compatible and almost zero-cost, approach for improving the lookup performance in Kademlia-type systems. The modified systems try to maximize the identifiers' diversity of the contacts that are stored in the routing tables' buckets. Our model, in very close agreement with simulations, as well as the measurements we performed on modified KAD clients, have shown the positive impact of our approach on the lookup performance, in form of reduction in the average hop count. We consequently recommend to incorporate our approach in upcoming Kademlia implementations. Extending the model and evaluations for dynamic systems are among our agenda for future work.

Our model predictions, in very close agreement with simulations, as well as measurements of modified KAD clients, confirm the positive impact of our scheme on the lookup performance, in form of reduction in the average hop count. The simulation results also show that the systems with small bucket sizes (namely: MDHT and KAD) can benefit more from our approach in the dynamic scenarios (i.e. with churn). We have attributed this to the resulting high frequency of routing table maintenance processes which are utilized by our scheme to improve the performance. Nevertheless, the dynamic scenario cannot be captured by the current version of the model, and therefore those results are still preliminary.

Our plans for future work include: (\emph{i}) modelling the impact of churn, network size, and the partial deployment scenario (where only part of the nodes implements our scheme), and (\emph{ii})  extending our approach by integrating it to notable prior improvements like \cite{kaune08embracing} and \cite{li2005bandwidth}.

%%%%%%%%%%%%%%%%%%%%%%%%%%%%%%%%%%%%%%%%%%%%%%%%%%%%%%%%%%%%%%%%%%%%%%%%%%%%%%%%%%%%%%%%%%%%%%%%%%%%%%%%%%%%%%%%%%%%

\section*{Acknowledgements}
This work was supported by the German Academic Exchange Service (DAAD: project no. A/09/97920) and the German Research Foundation (DFG: project no. DFG GRK 1362 (TUD GKmM)). The authors would like to thank Sven Frese for his help with the measurements.

%%%%%%%%%%%%%%%%%%%%%%%%%%%%%%%%%%%%%%%%%%%%%%%%%%%%%%%%%%
%   bibliography
%%%%%%%%%%%%%%%%%%%%%%%%%%%%%%%%%%%%%%%%%%%%%%%%%%%%%%%%%%

\bibliographystyle{IEEEtran}
\bibliography{main}

% Generated by IEEEtran.bst, version: 1.13 (2008/09/30)
\begin{thebibliography}{10}
\providecommand{\url}[1]{#1}
\csname url@samestyle\endcsname
\providecommand{\newblock}{\relax}
\providecommand{\bibinfo}[2]{#2}
\providecommand{\BIBentrySTDinterwordspacing}{\spaceskip=0pt\relax}
\providecommand{\BIBentryALTinterwordstretchfactor}{4}
\providecommand{\BIBentryALTinterwordspacing}{\spaceskip=\fontdimen2\font plus
\BIBentryALTinterwordstretchfactor\fontdimen3\font minus
  \fontdimen4\font\relax}
\providecommand{\BIBforeignlanguage}[2]{{%
\expandafter\ifx\csname l@#1\endcsname\relax
\typeout{** WARNING: IEEEtran.bst: No hyphenation pattern has been}%
\typeout{** loaded for the language `#1'. Using the pattern for}%
\typeout{** the default language instead.}%
\else
\language=\csname l@#1\endcsname
\fi
#2}}
\providecommand{\BIBdecl}{\relax}
\BIBdecl

\bibitem{Maymounkov02Kademlia}
P.~Maymounkov and D.~Mazieres, ``{Kademlia: A peer-to-peer information system
  based on the xor metric},'' in \emph{Proc. of IPTPS}, 2002.

\bibitem{wang2013measuring}
L.~Wang and J.~Kangasharju, ``{Measuring large-scale distributed systems: case
  of BitTorrent Mainline DHT},'' in \emph{Proc. of IEEE P2P}, 2013.

\bibitem{salah13capturing}
H.~Salah and T.~Strufe, ``{Capturing connectivity graphs of a large-scale p2p
  overlay network},'' in \emph{Proc. of ICDCS Workshops}, 2013.

\bibitem{jimenez11kademlia}
R.~Jimenez, ``{Kademlia on the open internet: How to achieve sub-second lookups
  in a multimillion-node DHT overlay},'' \emph{Lic. Thesis, KTH}, 2011.

\bibitem{starnberger08overbot}
G.~Starnberger, C.~Kruegel, and E.~Kirda, ``{Overbot: a botnet protocol based
  on Kademlia},'' in \emph{Proc. of SecureComm}, 2008.

\bibitem{holz08measurements}
T.~Holz \emph{et~al.}, ``{Measurements and Mitigation of Peer-to-Peer-based
  Botnets: A Case Study on Storm Worm},'' \emph{LEET}, 2008.

\bibitem{li05comparing}
J.~Li \emph{et~al.}, ``{Comparing the performance of distributed hash tables
  under churn},'' in \emph{Springer P2P Systems}, 2005.

\bibitem{stutzbach06improving}
D.~Stutzbach and R.~Rejaie, ``{Improving Lookup Performance Over a
  Widely-Deployed DHT},'' in \emph{Proc. of INFOCOM}, 2006.

\bibitem{steiner10evaluating}
M.~Steiner, D.~Carra, and E.~W. Biersack, ``{Faster content access in KAD},''
  in \emph{Proc. of IEEE P2P}, 2008.

\bibitem{falkner07profiling}
J.~Falkner \emph{et~al.}, ``{Profiling a million user DHT},'' in \emph{Proc. of
  IMC}, 2007.

\bibitem{crosby07analysis}
S.~Crosby and D.~Wallach, ``{An analysis of bittorrent’s two kademlia-based
  DHTs},'' TR07-04, Rice University, Tech. Rep., 2007.

\bibitem{Jimenez2011subsecond}
R.~Jimenez, F.~Osmani, and B.~Knutsson, ``{Sub-second lookups on a large-scale
  Kademlia-based overlay},'' in \emph{Proc. of IEEE P2P}, 2011.

\bibitem{roos13comprehending}
S.~Roos, H.~Salah, and T.~Strufe, ``{Comprehending Kademlia Routing-A
  Theoretical Framework for the Hop Count Distribution},'' \emph{Technical
  Report, arXiv preprint arXiv:1307.7000}, 2013.

\bibitem{saroiu01measur}
S.~Saroiu, P.~K. Gummadi, and S.~D. Gribble, ``{A Measurement Study of
  Peer-to-Peer File Sharing Systems},'' \emph{Technical Report UW-CSE-01-06-02,
  University of Washington}, 2001.

\bibitem{wang08attacking}
P.~Wang \emph{et~al.}, ``{Attacking the kad network},'' in \emph{SecureComm},
  2008.

\bibitem{einziger13kaleidoscope}
G.~Einziger, R.~Friedman, and E.~Kibbar, ``{Kaleidoscope: Adding Colors to
  Kademlia},'' in \emph{Proc. of IEEE P2P}, 2013.

\bibitem{wang06dicas}
C.~Wang \emph{et~al.}, ``{DiCAS: An efficient distributed caching mechanism for
  P2P systems},'' \emph{TPDS}, 2006.

\bibitem{kaune08embracing}
S.~Kaune \emph{et~al.}, ``{Embracing the peer next door: Proximity in
  kademlia},'' in \emph{Proc. of IEEE P2P}, 2008.

\bibitem{castro02exploiting}
M.~Castroand \emph{et~al.}, ``{Exploiting network proximity in distributed hash
  tables},'' in \emph{Proc. of FuDiCo}, 2002.

\bibitem{heep10r}
B.~Heep, ``{R/Kademlia: Recursive and topology-aware overlay routing},'' in
  \emph{Proc. of IEEE ATNAC}, 2010.

\bibitem{stoica2001chord}
I.~Stoica \emph{et~al.}, ``{Chord: A scalable peer-to-peer lookup service for
  internet applications},'' in \emph{Proc. of SIGCOMM}, 2001.

\bibitem{rowstron2001pastry}
A.~Rowstron and P.~Druschel, ``{Pastry: Scalable, decentralized object
  location, and routing for large-scale peer-to-peer systems},'' in \emph{Proc.
  of Middleware}, 2001.

\bibitem{malkhi2002viceroy}
D.~Malkhi, M.~Naor, and D.~Ratajczak, ``{Viceroy: A scalable and dynamic
  emulation of the butterfly},'' in \emph{Proc. of PODC}, 2002.

\bibitem{spognardi2006formal}
A.~Spognardi and R.~D. Pietro, ``{A formal framework for the performance
  analysis of P2P networks protocols},'' in \emph{Proc. of IPDPS}, 2006.

\bibitem{rai2007performance}
I.~Rai \emph{et~al.}, ``{Performance Modelling of Peer-to-Peer Routing},'' in
  \emph{Proc. of IPDPS}, 2007.

\bibitem{Montresor09peersim}
A.~Montresor and M.~Jelasity, ``{PeerSim: A Scalable {P2P} Simulator},'' in
  \emph{Proc. of IEEE P2P}, 2009.

\bibitem{stingl11peerfactsim}
D.~Stingl \emph{et~al.}, ``{Peerfactsim.kom: A simulation framework for
  peer-to-peer systems},'' in \emph{Proc. of HPCS}, 2011.

\bibitem{baumgart07oversim}
I.~Baumgart, B.~Heep, and S.~Krause, ``{OverSim: A flexible overlay network
  simulation framework},'' in \emph{Proc. of IEEE Global Internet Symposium},
  2007.

\bibitem{yao2006modeling}
Z.~Yao \emph{et~al.}, ``{Modeling heterogeneous user churn and local resilience
  of unstructured p2p networks},'' in \emph{Proc. of IEEE ICNP}, 2006.

\bibitem{stutzbach06understanding}
D.~Stutzbach and R.~Rejaie, ``{Understanding churn in peer-to-peer networks},''
  in \emph{Proc. of IMC}, 2006.

\bibitem{baumgart2012fast}
I.~Baumgart and B.~Heep, ``{Fast but economical: A simulative comparison of
  structured peer-to-peer systems},'' in \emph{Proc. of IEEE NGI}, 2012.

\bibitem{salah14characterizing}
H.~Salah, S.~Roos, and T.~Strufe, ``{Characterizing Graph-Theoretic Properties
  of a Large-Scale Overlay: Measurements vs. Simulations},'' in \emph{Proc. of
  IEEE ISCC}, 2014.

\bibitem{li2005bandwidth}
J.~Li \emph{et~al.}, ``{Bandwidth-efficient management of DHT routing
  tables},'' in \emph{Proc. of Usenex NSDI}, 2005.

\end{thebibliography}
\end{document}